\newtheorem{definition}{Definition}
\numberwithin{definition}{section}
\newtheorem{theorem}[definition]{Theorem}
\newtheorem{corollary}[definition]{Corollary}
\newtheorem{proposition}[definition]{Proposition}
\newtheorem{lemma}[definition]{Lemma}
\newtheorem{example}[definition]{Example}
\def\K{\ensuremath{\mathbb{Q}}}
\def\Kbar{\ensuremath{\overline{\mathbb{K}}}}
\def\Kbar{\ensuremath{{\mathbb{C}}}}
\def\fieldC{\ensuremath{\mathbb{C}}}
\def\field{\ensuremath{\mathbb{Q}}}
\def\R{\ensuremath{\mathbb{R}}}
\def\C{\ensuremath{\mathbb{C}}}
\def\Q{\ensuremath{\mathbb{Q}}}
\def\N{\ensuremath{\mathbb{N}}}
\def\T{\ensuremath{\mathbb{T}}}
\def\a{\bm{a}}
\def\c{\bm{c}}
\def\h{\bm{h}}
\def\vl{{L}}
\def\rank{\ensuremath{{\rm rank}}}
\def\jac{\ensuremath{{\rm Jac}}}
\def\fraka{{\mathfrak{a}}}
\def\Nmap{{\bm{P}}}
\def\calU{{\mathcal{U}}}
\def\calN{{\phi}}
\def\calA{{\mathcal{A}}}
\def\calS{{\mathcal{S}}}
\def\bmx{{\bm{z}}}
\def\bmn{{\bm{p}}}
\def\bmN{{\bm{P}}}
\def\bmH{{\bm{H}}}
\def\bmMV{{\bm{V}}}
\def\n{{p}}
\def\N{{P}}
\def\softO{\ensuremath{{O}{\,\tilde{ }\,}}}
\def\scrR{\ensuremath{\mathscr{R}}}
\def\macrodelta{t}
\title[Faster real root decision algorithm for symmetric
polynomials]{Faster real root decision algorithm for symmetric polynomials}
\author{George Labahn}
\affiliation{%
  \institution{Cheriton School of Computer Science}%
  \city{University of Waterloo}%
  \country{Ontario, Canada} }
  \author{Cordian Riener}
\affiliation{%
  \institution{Department of Mathematics and Statistics}
  \city{UiT, The Arctic University of Norway, Troms\o{}} 
  \postcode{N-9037}\country{Norway}
}
\author{Mohab Safey El Din}
\affiliation{%
	\institution{Sorbonne Universit\'e, CNRS, LIP6}
	\city{F-75005 Paris}
	\postcode{75252}\country{France}}
\author{\'Eric Schost}
\affiliation{%
  \institution{Cheriton School of Computer Science}%
  \city{University of Waterloo}%
  \country{Ontario, Canada} }
\author{Thi Xuan Vu}
\affiliation{%
  \institution{Department of Mathematics and Statistics}
  \city{UiT, The Arctic University of Norway, Troms\o{}} 
  \postcode{N-9037}\country{Norway}
}
\begin{document}

\begin{abstract}
  In this paper, we consider the problem of deciding the existence of real
  solutions to a system of polynomial equations having real coefficients, and
  which are invariant under the action of the symmetric group. We construct and
  analyze a Monte Carlo probabilistic algorithm which solves this problem, under
  some regularity assumptions on the input, by taking advantage of the symmetry
  invariance property.

  The complexity of our algorithm is polynomial in $d^s, {{n+d} \choose d}$, and
  ${{n} \choose {s+1}}$, where $n$ is the number of variables and $d$ is the
  maximal degree of $s$ input polynomials defining the real algebraic set under
  study. In particular, this complexity is polynomial in $n$ when $d$ and $s$
  are fixed and is equal to $n^{O(1)}2^n$ when $d=n$.
\end{abstract}

\maketitle 
 
\section{Introduction}

Let $\bm{f}=(f_1, \dots, f_s)$ be polynomials in the multivariate
polynomial ring $\Q[x_1, \ldots, x_n]$ and let $V(\bm{f}) \subset \C^n$ be
the algebraic set defined by $\bm{f}$. We denote by $V_\R(\bm{f}) := V(\bm{f})
\cap \R^n$ the set of solutions in $\R^n$ to the system $\bm{f}$.  In
addition we assume that all $f_i$'s are invariant under the action of the
symmetric group $S_n$, that is, are symmetric polynomials (or
equivalently, $S_n$-invariant polynomials).

Under this invariance property, we design an algorithm
which, on input $\bm{f}$, decides whether $V_\R(\bm{f})$ is empty or not. As is typical for such
problems, we assume that the Jacobian matrix of $\bm{f}$ with respect to
$x_1, \dots, x_n$ has rank $s$ at any point of $V(\bm{f})$.  In this case
the Jacobian criterion \cite[Thm 16.19]{eisenbud2013commutative}
implies that the complex algebraic set $V(\bm{f})$ is smooth and
$(n-s)$-equidimensional (or empty).

\smallskip\noindent{{\bf Previous work.}}
The real root decision problem for polynomial systems of equations
(and more generally systems of inequalities) lies at the foundations
of computational real algebraic geometry.  Algorithms for solving
polynomial systems over the real numbers start with
Fourier~\cite{Fourier26} who provided a first algorithm for solving
linear systems of inequalities (rediscovered in 1919 by
Dines~\cite{Dines19}). These algorithms are important because they
make the first connection with elimination theory. Tarski's
theorem~\cite{Tarski} states that the projection of a semi-algebraic
set on a coordinate subspace is a semi-algebraic set. This theorem,
and its algorithmic counterpart which relies on Sturm's theorem for
real root counting in the univariate case, enable recursive
algorithmic patterns (eliminating variables one after another). The
first algorithm with an elementary recursive complexity, {\it
  Cylindrical Algebraic Decomposition}, is due to Collins (see
\cite{Collins} and references in~\cite{McCallum, McCallumeq, Hong:92d,
  Strz06, Strz14, Chen09, EnDa20, Chen20} for various further
improvements).

It turns out that these algorithms run in time doubly exponential in
$n$~\cite{Dav88, BrDa07}. Note that some variants actually solve the
quantifier elimination problem, a much more general and difficult
computational problem than the real root decision problem.

Algorithms which solve the real root decision problem in time singly
exponential in $n$ and polynomial in the maximum degree of the input
were pioneered by Grigoriev and Vorobjov~\cite{GV88}
and Renegar~\cite{Ren}, and further improved by Canny~\cite{Canny},
Heintz, Roy and Solern\'o~\cite{HRS93} and Basu, Pollack and
Roy~\cite{BPR96}. The method used in this framework is referred to as
the {\em critical point method}.  It reduces the real root decision
problem to the computation of finitely many complex critical points of
a polynomial map which reaches extrema at each connected component of
the semi-algebraic set under study.

The algorithm proposed here for solving the real root decision problem
for systems of 
symmetric  polynomial equations also builds on the
critical point method.
It borrows ideas from probabilistic algorithms which have been
designed to obtain sharper complexity estimates (e.g. cubic either in
some B\'ezout bound attached to some critical point system or in some
geometric intrinsic degree) and obtain practical performances that
reflect the complexity gains~\cite{BGHM1, BGHM2, BGHM3, BGHM4, BGHM5,
  SaSc03, BGHS14}. These algorithms make use of geometric resolution
or symbolic homotopy techniques to control the complexity of the
algebraic elimination step~(see e.g.~\cite{GiLeSa01, SaSc18} and
references therein), and of regularity assumptions to easily derive
critical point systems from the input polynomials.

Under the Jacobian criterion assumptions, critical points are defined
as the intersection of the affine variety $V(\bm{f})$ with a determinantal
variety derived from a certain Jacobian matrix. The design of
dedicated algebraic elimination algorithms for this particular setting
has attracted some attention already~\cite{FaSaSp12, BGHLM15, Spa14,
  SaSp16, HSSV21}. When adding the symmetry 
  property to
polynomials defining the variety {\em and} the polynomial map for
which one computes the critical points, significant improvements have
been achieved recently in~\cite{faugere2020computing} by using the
symbolic homotopy algorithms in~\cite{labahn2021homotopy}.

These improvements, which allows one to obtain complexity gains
related to the combinatorial complexity of the symmetric group, also
borrow ideas from algebraic algorithms working with data which are
invariant by the action of this group~\cite{FaugereJules12}. We
emphasize that taking advantage of symmetries in data is a topical and
difficult issue, which involves a variety of
methodologies~\cite{perminov2009, buse2016resultant, Colin97,
  Sturmfels2008, Fau09}.

In~\cite{Timofte2003}, Timofte proves a breakthrough result which is
now known as the degree principle. It states that 
a symmetric 
polynomial of degree $d$ with real coefficients has real solutions if
and only if one of these solutions has at most $d$ distinct
coordinates.

This shows that when $d$ is fixed and $n$ grows, the real root
decision problem can be solved in polynomial time. This is far better
than computing at least one sample point per connected component (see
also~\cite{BaRie17, basu2017efficient, basu2022vandermonde}), and is
one of the rare interesting cases where the best known algorithms for
these two problems admit different complexities.
This is also the starting point of several results which enhance the
real root decision problem and polynomial optimization under some
$S_n$-invariance property for classes of problems where $d$ remains
fixed and $n$ grows (see~\cite{Rie13, Riener2012, Riener2016,
  gatermann2004symmetry} and \cite{RieSa18} for equivariant systems).

\smallskip\noindent{{\bf Main contributions.}}
Being able to leverage $S_n$-invariance for critical point
computations is not sufficient to solve root decision problems more
efficiently using the critical point method. Additional techniques are
needed.

Indeed, to solve the real root decision problem by finding the
critical points of a polynomial map $\phi$, one typically defines
$\phi$ as the distance from points on the variety to a generic
point. This map reaches extrema at each connected component of the
semi-algebraic set under study. However, the map $\phi$ is not
symmetric. If it was, our problem would be solved by the critical
point algorithm of \cite{faugere2020computing}. Unfortunately there
does not appear to be an obvious symmetric map that fits the bill.

Instead, our approach is to apply the critical point method on
individual $S_n$-orbits, with suitable $\phi$ found for each
orbit. Thus while we cannot use the critical point algorithm of
\cite{faugere2020computing} directly we can make use of the various
subroutines used in it to construct a fast decision
procedure.
Intuitively, working with $S_n$-orbits is the same as separately
searching for real points having distinct coordinates, or real points
having two or more coordinates which are the same, or groups of
coordinates each of which has equal coordinates and so on.  In each
case an orbit can be described by points having $n$ or fewer pairwise
distinct coordinates, a key observation in constructing generic maps
invariant for each orbit. 

\begin{theorem}\label{thm:main}
 Let $\bm{f}=(f_1, \dots, f_s)$ be symmetric 
 polynomials in $\Q[x_1,
   \ldots, x_n]$ having maximal degree $d$. Assume that the Jacobian
 matrix of $\bm{f}$ with respect to $x_1, \dots, x_n$ has rank $s$ at any
 point of $V(\bm{f})$. Then there is a Monte Carlo algorithm ${\sf
   Real\_emptiness}$ which solves the real root decision problem for
 $\bm{f}$ with
\begin{multline*}
\softO\left(d^{6s+2}n^{11}{{n+d}\choose n}^6\left({{n+d}\choose n}
     +  {{n}\choose {s+1}} \right) \right) \\ \subset
 \left(d^s{{n+d}\choose n} {{n}\choose {s+1}} \right)^{O(1)}
\end{multline*}
operations in $\Q$.  Here the notion $\softO$ indicates that polylogarithmic factors are omitted. 
\end{theorem}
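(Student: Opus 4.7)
The plan is to reduce the real root decision problem on the full variety $V(\bm{f})$ to a collection of critical point computations, one per $S_n$-orbit type in $\R^n$, each of which can be handled by the symmetric critical point machinery of \cite{faugere2020computing, labahn2021homotopy}. Recall that an $S_n$-orbit in $\R^n$ is specified by a composition $\bm{n}=(n_1,\dots,n_\ell)$ of $n$ recording the multiplicities of the distinct coordinates of its points. A first key step is to enumerate only the \emph{relevant} orbit types: combining a Timofte-style degree argument with the rank-$s$ Jacobian hypothesis, any real point of $V(\bm{f})$ must lie in the closure of an orbit of type $\bm{n}$ with $\ell \le s+1$ parts, and the number of such compositions is $O\!\left({n\choose s+1}\right)$, explaining this factor in the final complexity.

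For each such composition $\bm{n}=(n_1,\dots,n_\ell)$, the second step introduces new variables $y_1,\dots,y_\ell$ and substitutes $x_i = y_j$ whenever the index $i$ lies in the $j$-th block. This produces a reduced system $\bm{f}^{(\bm{n})} \in \Q[y_1,\dots,y_\ell]^s$ whose real solutions with pairwise distinct $y_j$'s are in bijection with real points of $V_\R(\bm{f})$ of orbit type exactly $\bm{n}$. A generic linear change of coordinates preserving the block structure ensures that $\bm{f}^{(\bm{n})}$ still satisfies the Jacobian rank hypothesis. Moreover, $\bm{f}^{(\bm{n})}$ retains a residual symmetry under the stabilizer $S_{\bm{n}} = S_{m_1} \times \cdots \times S_{m_r}$ of the block pattern (the $m_k$ being the repeated block sizes), and this smaller product symmetry is precisely what one needs to invoke the symbolic homotopy symmetric critical point solver.

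The third step applies the critical point method within each orbit type: choose a polynomial map $\phi_{\bm{n}}$ invariant under $S_{\bm{n}}$ whose restriction to $V(\bm{f}^{(\bm{n})})$ reaches extrema on every connected component of the real trace; compute the critical locus via the routines of \cite{faugere2020computing, labahn2021homotopy}; then test the resulting zero-dimensional set for points whose coordinates are real and pairwise distinct (discarding those that belong to more degenerate orbit types, which are handled by the calls for coarser $\bm{n}$). Each such call costs polynomial in $d^s$ and ${n+d \choose n}$, and summing over the $O\!\left({n \choose s+1}\right)$ relevant orbit types produces the stated $\softO$ bound.

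The main obstacle will be the first step: proving the refined Timofte-style cutoff using the Jacobian rank $s$, so that only compositions with at most $s+1$ parts need to be inspected — this is what separates the ${n \choose s+1}$ bound from an exponential count over all set partitions of $\{1,\dots,n\}$. Secondary technical points are to certify that the generic choices (the block-compatible change of coordinates for each $\bm{n}$, and the choice of $\phi_{\bm{n}}$) preserve the input hypotheses of the symmetric critical point solver with high probability, and to aggregate the Monte Carlo failure probabilities across the ${n \choose s+1}$ subcalls without blowing up the overall success probability.
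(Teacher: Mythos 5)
Your high-level architecture — substitute blocks of equal variables according to an orbit type, check that condition {\sf (A)} survives the substitution, then run a symmetric critical-point solver per orbit type and test reality of the fibers — matches the paper. However, your first step contains a gap that does not appear in (and is not needed by) the paper's argument, and it also misattributes where the $\binom{n}{s+1}$ factor comes from.

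You claim that a ``Timofte-style degree argument combined with the rank-$s$ Jacobian hypothesis'' shows that only orbit types with at most $s+1$ parts need to be inspected, and that this is the source of the $\binom{n}{s+1}$ in the bound. Neither half of that claim is correct. Timofte's degree principle bounds the number of distinct coordinates by the \emph{degree} $d$, not by the number of equations $s$, and the paper invokes it only as motivation in the introduction, not in the algorithm. There is no lemma in the paper that lets you discard partitions of length larger than $s+1$, and as a structural assertion it is false: for instance with $n=3$, $s=1$ and $f=\prod_{1\le i<j\le 3}(x_i-x_j)^2-1$, every real zero of $f$ has three pairwise-distinct coordinates, so no real point of $V_\R(f)$ lies in (the closure of) an orbit type with $\le s+1=2$ parts. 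What the algorithm actually does (see the loop in Algorithm~\ref{main_algo} and the proof following it) is iterate over \emph{all} partitions $\lambda$ of $n$ of length $\ell\ge s$; partitions with $\ell<s$ are discarded only because $V(\bm{f}_\lambda)$ is then forced to be empty under {\sf (A)}, which is a dimension count, not a Timofte argument. The potentially large number of partitions is then tamed in the complexity analysis via Lemma~34 of \cite{faugere2020computing}, which bounds the sums $\sum_\lambda c_\lambda$ and $\sum_\lambda e_\lambda$ over all partitions of length $\ge s$ by closed-form quantities; it is this aggregate bound, not a count of orbit types, that makes the total cost controllable.

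Correspondingly, the $\binom{n}{s+1}$ factor in the theorem does not count orbit types. It is the size of the set of $(s+1)$-minors of the relevant Jacobian matrix, entering through the per-call cost $\Gamma = n^2\binom{n+\delta}{\delta}+n^4\binom{n}{s+1}$ of the weighted symbolic-homotopy solver in Lemma~\ref{lemma:subroutine} (inherited from \cite[Thm~5.3]{labahn2021homotopy}). If you want to keep your restriction-to-few-parts strategy, you would first have to prove a new lemma of the form ``if $V_\R(\bm{f})\ne\emptyset$ then it contains a point with at most $s+1$ (or $d$, or some explicit bound of) distinct coordinates,'' which the paper neither proves nor needs; without it your enumeration in step one is incomplete and the algorithm could wrongly answer ``empty.'' A secondary, smaller issue: the reality test in your step three cannot simply ``test the resulting zero-dimensional set for points whose coordinates are real and pairwise distinct,'' because the solver returns a parametrization in the elementary-symmetric coordinates $\bm{e}$; one must, as in the paper's {\sf Decide} routine, form the degree-$t_i$ polynomials $\rho_i$ from those coordinates and use Sturm--Habicht/Thom-encoding machinery to certify that all their roots are real, which is what actually decides real-fiber non-emptiness.
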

The remainder of the paper proceeds as follows. The next section
reviews known material, on invariant polynomials over products of
symmetric groups, the tools we use to work with $S_n$-orbits, and our
data structures. Section \ref{sec:smoothness} discusses our smoothness
requirement and shows that it is preserved by alternate
representations of invariant polynomials. Section \ref{sec:locus}
shows how we construct critical point functions along with their
critical point set. This is followed in Section \ref{sec:main} by a
description of our algorithm along with a proof of correctness and
complexity. The paper ends with a section on topics for future
research.



\section{Preliminaries}

\subsection{Invariant Polynomials}
\label{section:building_map}

We briefly review some properties of polynomials invariant under the
action of $S_{t_1}\times\cdots \times S_{t_k}$, with $S_{t_i}$ the
symmetric group on $t_i$ elements, for all $i$. In this paragraph, we work
with variables $\bmx = (\bmx_1, \ldots, \bmx_k)$, with each
$\bmx_i~=~(z_{1, i},\ldots, z_{t_i, i})$; for all $i$, the group
$S_{t_i}$ permutes the variables $\bmx_i$. For $j \ge 0$, we denote by
\[
E_{j,i} =  \sum_{1 \leq m_1 <m_2 < \cdots < m_j \leq t_i} z_{m_1,i} z_{m_2,i}
\cdots z_{m_j,i},
\]
the elementary  polynomial in the variables $\bm{z}_i$, with
each $E_{j, i}$ having degree~$j$, and by 
\[
\N_{j,i} = z_{1, i}^j + \cdots + z_{t_i, i}^j
\] 
the $j$-th Newton sum in the variables $\bmx_i$, for $i=1,\dots,k$.
The following two results are well-known.

For $i=1,\dots,k$, let $\bm{e}_i = (e_{1,i}, \dots, e_{t_i, i})$ be a set
of $t_i$ new variables and let $\bm{E}_i = (E_{1, i}, \dots, E_{t_i,
  i})$; we write $\bm{e}=(\bm{e}_1,\dots,\bm{e}_k)$ and $\bm{E}= (\bm{E}_1,\dots,\bm
E_k)$. 
\begin{lemma}\label{lemma:elemmap}
  Let $g\in [\bmx_1, \dots, \bmx_k]$ be invariant under the
  action of $S_{t_1}\times \cdots \times S_{t_k}$. Then there exists
  a unique $\gamma_g$ in $\field[\bm{e}]$ such that $g = \zeta_g(\bm{E})$.
\end{lemma}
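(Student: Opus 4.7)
The plan is to reduce the lemma to the classical fundamental theorem of symmetric polynomials (Newton's theorem for a single $S_{t_i}$) by induction on the number of factors $k$. The base case $k=1$ is Newton's theorem, which in fact holds over any commutative ring of coefficients; this extra generality is what makes the inductive step work cleanly.

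For the inductive step, I view $g$ as a polynomial in $\bmx_1$ with coefficients in $R:=\Q[\bmx_2,\dots,\bmx_k]$. Since $S_{t_1}$ permutes only the variables $\bmx_1$ and fixes $R$ pointwise, Newton's theorem over $R$ supplies a unique $h\in R[\bm{e}_1]=\Q[\bm{e}_1,\bmx_2,\dots,\bmx_k]$ satisfying $g=h(\bm{E}_1,\bmx_2,\dots,\bmx_k)$. The next step is to verify that $h$, regarded now as a polynomial in $\bmx_2,\dots,\bmx_k$ over the coefficient ring $\Q[\bm{e}_1]$, is invariant under $S_{t_2}\times\cdots\times S_{t_k}$. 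For $\sigma$ in one of these groups, applying $\sigma$ to $g$ yields both $g$ itself (by hypothesis) and $h^\sigma(\bm{E}_1,\bmx_2,\dots,\bmx_k)$, where $h^\sigma$ is obtained by permuting the appropriate $\bmx_i$-block in $h$ (since $\sigma$ fixes $\bmx_1$, and hence $\bm{E}_1$). The uniqueness clause in Newton's theorem then forces $h^\sigma=h$. Applying the induction hypothesis to $h$ over the coefficient ring $\Q[\bm{e}_1]$ produces a $\gamma_g\in\Q[\bm{e}]$ with $h=\gamma_g(\bm{e}_1,\bm{E}_2,\dots,\bm{E}_k)$, and hence $g=\gamma_g(\bm{E})$.

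Uniqueness of $\gamma_g$ at the top level follows once one observes that the tuple $\bm{E}=(\bm{E}_1,\dots,\bm{E}_k)$ is algebraically independent over $\Q$: each $\bm{E}_i$ is algebraically independent in the variables $\bmx_i$ by the classical theorem, and since the blocks $\bmx_1,\dots,\bmx_k$ are pairwise disjoint, the substitution $\bm{e}\mapsto\bm{E}$ defines an injection $\Q[\bm{e}]\hookrightarrow\Q[\bmx]$. I do not foresee a significant obstacle; the only point requiring care is correctly tracking the action of the remaining factors $S_{t_2}\times\cdots\times S_{t_k}$ on the intermediate polynomial $h$, which is exactly why having the base case available over an arbitrary coefficient ring is essential.
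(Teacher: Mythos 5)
The paper itself gives no proof of this lemma (and of Lemma~\ref{lemma:newtonmap}), stating only that the two results are well-known, so there is no argument in the paper to compare against. Your proof is correct: induction on the number of blocks $k$, with the single-group fundamental theorem of symmetric polynomials over an arbitrary commutative coefficient ring as the engine, is the standard way to establish this, and the key steps --- using uniqueness to transfer the $S_{t_2}\times\cdots\times S_{t_k}$-invariance of $g$ to the intermediate polynomial $h\in\Q[\bm{e}_1,\bmx_2,\dots,\bmx_k]$, and verifying algebraic independence of the full tuple $\bm{E}$ for uniqueness --- are all sound.

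One small remark on phrasing: you write that ``having the base case available over an arbitrary coefficient ring is essential,'' but in fact the \emph{entire inductive statement} must be formulated over an arbitrary commutative ring $R$ (i.e.\ prove: for every $R$ and every $S_{t_1}\times\cdots\times S_{t_k}$-invariant $g\in R[\bmx_1,\dots,\bmx_k]$ there is a unique $\gamma_g\in R[\bm e]$ with $g=\gamma_g(\bm E)$), since you invoke the $(k-1)$-block case over the ring $\Q[\bm e_1]$, not over $\Q$. Your argument clearly intends this, but saying ``the base case'' understates what needs to carry through the induction. As an alternative, one can avoid induction entirely by observing that, since the variable blocks are disjoint, $\Q[\bmx]^{S_{t_1}\times\cdots\times S_{t_k}}\cong \Q[\bmx_1]^{S_{t_1}}\otimes_\Q\cdots\otimes_\Q\Q[\bmx_k]^{S_{t_k}}$, and each tensor factor is the polynomial ring $\Q[\bm E_i]$ by the classical theorem; the tensor product of polynomial $\Q$-algebras on algebraically independent generators is again one, giving existence and uniqueness in one stroke. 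Both routes are fine; yours is more elementary, the tensor-product route is shorter.
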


Similarly, let $\n_{j,i}$ be new variables, and consider the sequences
$\bmn_i=(\n_{1, i}, \ldots, \n_{t_i, i})$ and $\bmn = (\bmn_1, \ldots,
\bmn_k)$, together with their polynomial counterparts $\bmN_i=(\N_{1,
  i}, \ldots, \N_{t_i, i})$ and $\bmN = (\bmN_1, \ldots, \bmN_k)$.

\begin{lemma}\label{lemma:newtonmap}
  Let $g\in [\bmx_1, \dots, \bmx_k]$ be invariant under the
  action of $S_{t_1}\times \cdots \times S_{t_k}$. Then there exists
  a unique $\zeta_g$ in $\field[\bmn]$ such that $g = \gamma_g(\Nmap)$.
\end{lemma}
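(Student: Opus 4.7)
The plan is to deduce Lemma \ref{lemma:newtonmap} from Lemma \ref{lemma:elemmap} by a change of generators: over $\field = \Q$, the elementary symmetric polynomials $\bm{E}_i$ and the Newton power sums $\bmN_i$ generate the same ring of $S_{t_i}$-invariants, and the two families are related by mutually inverse polynomial substitutions with rational coefficients, namely Newton's identities.

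Concretely, I would begin by applying Lemma \ref{lemma:elemmap} to obtain the unique $\gamma_g \in \field[\bm{e}]$ with $g = \gamma_g(\bm{E})$. Newton's classical recursion
\[
j\, E_{j,i} \;=\; \sum_{\ell=1}^{j} (-1)^{\ell-1} E_{j-\ell,i}\, \N_{\ell,i}, \qquad 1 \le j \le t_i,
\]
is solvable for $E_{j,i}$ by induction on $j$ because $j$ is invertible in $\Q$, producing polynomials $\psi_{j,i}(p_{1,i}, \ldots, p_{j,i}) \in \field[\bmn_i]$ with $E_{j,i} = \psi_{j,i}(\N_{1,i}, \ldots, \N_{j,i})$ in $\field[\bmx]$. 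Substituting $e_{j,i} \mapsto \psi_{j,i}$ inside $\gamma_g$ then yields a polynomial $\zeta_g \in \field[\bmn]$ such that $g = \zeta_g(\bmN)$, establishing existence.

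For uniqueness, I would read the same recursion in the opposite direction to express each $\N_{j,i}$ as an integer-coefficient polynomial $\chi_{j,i}(e_{1,i}, \ldots, e_{j,i})$ in the elementary symmetrics. The substitutions $\psi_{j,i}$ and $\chi_{j,i}$ are mutually inverse, so $\Phi \colon \field[\bm{e}] \to \field[\bmn]$ sending $e_{j,i} \mapsto \psi_{j,i}$ is a $\field$-algebra isomorphism. If $\zeta_g, \zeta'_g \in \field[\bmn]$ both satisfy $g = \zeta_g(\bmN) = \zeta'_g(\bmN)$, then setting $\gamma := \Phi^{-1}(\zeta_g)$ and $\gamma' := \Phi^{-1}(\zeta'_g)$ and using the identities $\chi_{j,i}(E_{1,i}, \ldots, E_{j,i}) = \N_{j,i}$ in $\field[\bmx]$ gives $\gamma(\bm{E}) = g = \gamma'(\bm{E})$. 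The uniqueness clause of Lemma \ref{lemma:elemmap} then forces $\gamma = \gamma'$, whence $\zeta_g = \zeta'_g$.

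The one delicate point is the invertibility of the Newton substitution over $\Q$, which is standard but essential since the recursion introduces denominators up to $\max_i t_i$. Once this is in hand, the argument reduces to a purely formal change of coordinates between $\field[\bm{e}]$ and $\field[\bmn]$, both of which are isomorphic copies of the ring of $S_{t_1}\times\cdots\times S_{t_k}$-invariants inside $\field[\bmx]$.
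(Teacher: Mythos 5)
Your argument is correct and is precisely the standard proof: the paper states Lemmas \ref{lemma:elemmap} and \ref{lemma:newtonmap} as ``well-known'' without giving a proof, and the Newton-identity change of generators between $\bm{E}$ and $\bmN$ over $\Q$ is exactly the argument being implicitly invoked. The only thing worth flagging is a typo in the paper's statement (it writes $g=\gamma_g(\Nmap)$ where it clearly means $g=\zeta_g(\Nmap)$, mirroring the symbol swap already present in Lemma \ref{lemma:elemmap}), which your proof silently and correctly repairs.
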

\begin{example}
  Let 
  \[g = 2 (z_{1,1} z_{2,1} + z_{1,1}^2 + 2 z_{1,1} z_{2,1} + z_{2,1}^2)(z_{1,2}^2 + z_{2,2}^2),\]
  a polynomial invariant under $S_2 \times S_2$, with $\bmx_1 =
  (z_{1,1}, z_{2,1})$, $\bmx_2 = (z_{1,2}, z_{2,2})$, $k=2$ and $t_1=t_2=2$. In this
  case, we have
\[  g = (3 \N_{1,1}^2 - \N_{1,2})  \N_{2,2}\]
  and hence
  $
  \gamma_g =  (3 \n_{1,1}^2 - \n_{1,2})  \n_{2,2} \in \field[\n_{1,1},
    \n_{1,2}, \n_{2,1}, \n_{2,2}]. 
  $
\end{example}

\subsection{Describing $S_n$-orbits via Partitions}\label{sec:prelim}\label{sec:prelim:notation}

$S_n$-orbits are subsets of $\fieldC{}^n$ that play a central role in
our algorithm. In this section, we review notation and description of
$S_n$-orbits, along with the form of the output used in
\cite{faugere2020computing}.

A simple way to parameterize $S_n$-orbits is through the use of
partitions of $n$. A sequence $\lambda = (n_1^{t_1}\, \dots \,
n_k^{t_k})$, where $n_1 < \cdots < n_k$ and $n_i$'s and $t_i$'s are
positive integers, is called a partition of $n$ if $n_1t_1+\cdots +
n_kt_k = n$. The {\em length} of the partition $\lambda$ is defined as
$\ell := t_1 + \cdots + t_k$.

For a partition $\lambda = (n_1^{t_1}\, \dots \, n_k^{t_k})$ of $n$,
we use the notation from \cite[Section 2.3]{faugere2020computing} and
let $U_\lambda $ denote the set of all points $\bm{u}$ in $\fieldC{}^n$
that can be written as
\begin{multline} \label{eq:type}
 \bm{u} = (  \underbrace{u_{1, 1}, \dots, u_{1, 1}}_{n_1}, ~ 
    \dots,~ \underbrace{u_{t_1, 1}, \dots, u_{t_1,1}}_{n_1},~ 
 \dots,\\ ~ \underbrace{u_{1,k}, \dots, u_{1,k}}_{n_k},~
    \dots,~ \underbrace{u_{t_k, k}, \dots, u_{t_k, k}}_{n_k}).
\end{multline}
For any point $\bm{u}$ in $\fieldC{}^n$, we define
 its {\em type} as the unique partition $\lambda$ of $n$ such that
 there exists $\sigma \in S_n$ such that $\sigma(\bm{u}) \in U_{\lambda}$,
with the $u_{i,j}$'s in \eqref{eq:type} pairwise distinct.
Points of a given type $\lambda = (n_1^{t_1}\, \dots \, n_k^{t_k})$
are stabilized by the action of $S_\lambda := S_{t_1}\times \cdots
\times S_{t_k}$, the cartesian product of symmetric groups $S_{t_i}$.

For a partition $\lambda$ as above, we can then define a mapping $F_\lambda : U_\lambda \rightarrow
\fieldC{}^\ell$ as 
\begin{multline*}
\bm{u} {~\rm as~in~} \eqref{eq:type} \mapsto \\ (E_{1, i}(u_{1, i}, \dots,
u_{t_i, i}), \dots,E_{t_i, i}(u_{1, i}, \dots,u_{t_i, i}))_{1\le i \le k},
\end{multline*} where $E_{j, i}(u_{1, i}, \dots,u_{t_i, i})$  is the
$j$-th elementary 
symmetric function in $u_{1, i}, \dots,u_{t_i, i}$ for $i=1, \dots, k$
and $j=1, \dots, t_i$. One can think of the map $F_\lambda$ as a
compression of orbits. By applying this map, we can represent an
$S_n$-orbit $\mathcal{O}$ of type $\lambda$ by the single point
$F_\lambda(\mathcal{O}\cap U_\lambda).$

Furthermore, the map $F_\lambda$ is onto: for any $\bm{c} =  
(c_{1,1}, \dots,$ $c_{t_k, k}) \in \fieldC{}^\ell$, we define
polynomials $\rho_1(u), \dots, \rho_k(u)$ by 
\[
\rho_i(T) = T^{t_i} -c_{1, i}T^{t_i-1} + \cdots + (-1)^{t_i}c_{t_i, i}. 
\]
We can then find a point
$\bm{u} \in \fieldC{}^n$ in the preimage $F_\lambda^{~-1}(\bm{c})$ by
finding the roots $u_{1, i}, \dots, u_{t_i, i}$ of $\rho_i(T)$.

\subsection{Zero-Dimensional Parametrizations}
\label{subsec:para}
The subroutines we use from \cite{faugere2020computing} give their
output in terms of {\em zero-dimensional parametrizations}, which are
defined as follows. Let $W \subset \fieldC{}^n$ be a variety of
dimension zero, defined over $\field$. A zero-dimensional
parametrization $\scrR = ((v, v_1, \dots, v_n), \mu)$ of $W$ is 
\begin{itemize}
\item[(i)] a squarefree polynomial $v$ in $\field[t]$, where $t$ is a
  new indeterminate, and $\deg(v) = |W|$,
\item[(ii)] polynomials $v_1, \dots, v_n$ in $\field[t]$ such that
  $\deg(v_i) < \deg(v)$ for all $i$ and
\[
W = \left\{\left(\frac{v_1(\tau)}{v'(\tau)}, \dots,
    \frac{v_n(\tau)}{v'(\tau)}\right) \in \fieldC{}^n\, : \, v(\tau) = 0 \right\}, 
\]
\item[(iii)] a linear form $\mu$ in $n$ variables such that $\mu(v_1, \dots,
  v_n) = tv'$ (so the roots of $v$ are the values taken by $\mu$ on $W$).
\end{itemize}
When these conditions hold, we write $W = Z(\scrR)$. Representing the
points of $W$ by means of rational functions with $v'$ as denominator
is not necessary, but allows for a sharp control of the bit-size of
the output.

\section{Preserving smoothness}\label{sec:smoothness}

In our main algorithm, we assume that our input system
$\bm{f}=(f_1,\dots,f_s)$ satisfies the following smoothness condition
\begin{itemize}
\item[({\sf A})]: {\em the Jacobian matrix of $\bm{f}$ has rank $s$ at any point of $V(\bm{f})$.}
\end{itemize}
In this section, we discuss consequences of this assumption for
symmetric polynomials.

\smallskip\noindent{\bf Mapping to orbits: the map $\T_\lambda$.}  For
a partition $\lambda = (n_1^{t_1}\, \dots \, n_k^{t_k})$ of $n$, we
define the $\field$-algebra homomorphism $\T_\lambda:\field[x_1,
  \dots, x_n] \rightarrow \field[\bm{z}_1, \dots, \bm{z}_k]$, with
$\bmx_i=(z_{1, i},\ldots, z_{t_i, i})$ for all $i$, which maps the
variables $x_1, \dots, x_n$ to
\begin{multline}\label{eq:operator_T}
\underbrace{z_{1, 1}, \dots, z_{1, 1}}_{n_1}, ~
    \dots,~ \underbrace{z_{t_1, 1}, \dots, z_{t_1,1}}_{n_1},~
 \dots, \\ ~ \underbrace{z_{1,k}, \dots, z_{1,k}}_{n_k},~
    \dots,~ \underbrace{z_{t_k, k}, \dots, z_{t_k, k}}_{n_k}.
  \end{multline}

The operator $\T_\lambda$ extends to vectors of polynomials and
polynomial matrices entry-wise. The key observation here is that if
$f$ is symmetric, 
then its image through $\T_\lambda$ is
$S_{t_1}\times \cdots \times S_{t_k}$-invariant.

Fix a partition $\lambda = (n_1^{t_1}\, \dots \, n_k^{t_k})$ of $n$,
and let $\ell$ be its length.  Set
\[I_{j,i}:=\{\sigma_{j,i}+1,\dots,\sigma_{j,i}+n_i\}, 1 \le i \le k; 1\le j \le t_i\] with
$\sigma_{j,i}:=\sum_{r=1}^{i-1}t_{r}n_{r}+(j-1)n_i$. Variables $x_m$, for $m$ in
$I_{j,i}$, are precisely those that map to $z_{j,i}$ under $\T_\lambda$. Define
further the matrix $\bm{Z} \in \Q^{\ell \times n}$ with $\ell = t_1 +\cdots +
t_k$, where rows are indexed by pairs $(j,i)$ as above and columns by $m \in
\{1,\dots,n\}$. For all such $(j,i)$, the entry of row index $(j,i)$ and column
index $m \in I_{j,i}$ is set to $1/n_i$, all others are zero.  In
other words, $\bm{Z}  = {\rm diag}(\bm{Z}_1, \dots, \bm{Z}_k)$, where 
\[
\bm{Z}_i = \left( 
\begin{matrix}
\begin{matrix}\frac{1}{n_i} & \cdots & \frac{1}{n_i} \end{matrix} & {\bf
    0} & \cdots & {\bf 0}\\
{\bf 0} &\begin{matrix}\frac{1}{n_i} & \cdots &
  \frac{1}{n_i} \end{matrix} & \cdots & {\bf 0} \\
\vdots &  &\ddots & \vdots \\
{\bf 0} & {\bf 0} & \cdots & \begin{matrix}\frac{1}{n_i} & \cdots &
  \frac{1}{n_i} \end{matrix} 
\end{matrix} 
\right) 
\] is a matrix in $\Q^{t_i \times n_it_i}$.

\begin{example}\label{ex:matZ}
Consider the partition $\lambda = (2^2\, 3^1)$ of $n=7$. Then
$n_1=2, t_1=2$, $n_2=3$, $t_2=1$ and the length of $\lambda$ is
$3$. In this case,
\[
\bm{Z} =  
\left( \begin{matrix} \frac{1}{2} & \frac{1}{2} & \\
 &  & \frac{1}{2} & \frac{1}{2} \\
 &  & &  & \frac{1}{3}& \frac{1}{3}& \frac{1}{3}
\end{matrix}\right).
\]
\end{example}

\begin{lemma}
  Let $\bm{f}= (f_1, \dots, f_s) \subset \field[x_1, \dots, x_n]$ be a sequence of symmetric 
  polynomials, and let $\lambda$ be a partition of $n$. Then
  \[
 \T_\lambda(\jac_{x_1, \dots, x_n}(\bm{f}))= \jac_{\bm{z}_1, \dots,
   \bm{z}_k}(\T_\lambda(\bm{f})) \cdot \bm{Z},\] where $\bm{Z}$ is the matrix
 defined above.\label{lemma:decom}
\end{lemma}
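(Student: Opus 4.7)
The plan is to prove the identity entry-by-entry via the chain rule, exploiting the symmetry of $\bm{f}$.

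First, fix a single component $f = f_a$ and a column index $m \in \{1,\dots,n\}$. Recall that $m$ belongs to exactly one block $I_{j,i}$ among the index sets defined above, and the variables $x_{m'}$ for $m' \in I_{j,i}$ are precisely the $n_i$ variables sent to $z_{j,i}$ by $\T_\lambda$. The chain rule therefore gives
\[
\frac{\partial \T_\lambda(f)}{\partial z_{j,i}} \;=\; \sum_{m' \in I_{j,i}} \T_\lambda\!\left(\frac{\partial f}{\partial x_{m'}}\right).
\]
Since $f$ is $S_n$-invariant, it is in particular invariant under the transposition $x_{m} \leftrightarrow x_{m'}$ for any $m, m' \in I_{j,i}$, and hence $\partial f/\partial x_{m'}$ is obtained from $\partial f/\partial x_{m}$ by swapping $x_m$ and $x_{m'}$. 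Applying $\T_\lambda$ identifies these two variables to the common value $z_{j,i}$, so all the $n_i$ summands coincide, giving
\[
\frac{\partial \T_\lambda(f)}{\partial z_{j,i}} \;=\; n_i \cdot \T_\lambda\!\left(\frac{\partial f}{\partial x_{m}}\right), \qquad \text{i.e.,} \qquad \T_\lambda\!\left(\frac{\partial f}{\partial x_{m}}\right) \;=\; \frac{1}{n_i}\,\frac{\partial \T_\lambda(f)}{\partial z_{j,i}}.
\]

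Next, I would read off both sides of the claimed matrix identity. The $(a,m)$-entry of the left-hand side $\T_\lambda(\jac_{x_1,\dots,x_n}(\bm{f}))$ is by definition $\T_\lambda(\partial f_a/\partial x_m)$. For the right-hand side, by construction of $\bm{Z}$ the column $\bm{Z}_{*,m}$ has a single nonzero entry $1/n_i$ in row $(j,i)$ where $(j,i)$ is determined by $m \in I_{j,i}$, so
\[
\bigl(\jac_{\bm{z}_1,\dots,\bm{z}_k}(\T_\lambda(\bm{f})) \cdot \bm{Z}\bigr)_{a,m} \;=\; \frac{1}{n_i}\,\frac{\partial \T_\lambda(f_a)}{\partial z_{j,i}}.
\]
Comparing with the chain-rule identity above entry by entry concludes the proof.

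The argument is essentially routine; the only mild subtlety is keeping the block/indexing conventions straight (the pair $(j,i)$ attached to $m$, and the rescaling factor $1/n_i$ in $\bm{Z}$). The symmetry hypothesis enters in exactly one place, namely to collapse the $n_i$ summands of the chain-rule expansion into a single term, which is precisely what produces the $1/n_i$ weights encoded in the matrix $\bm{Z}$.
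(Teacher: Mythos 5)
Your proof is correct and follows essentially the same route as the paper's: chain rule for $\T_\lambda$, then symmetry to collapse the $n_i$ terms into $\frac{1}{n_i}\,\partial \T_\lambda(f)/\partial z_{j,i}$, then matching entries against the block structure of $\bm{Z}$. You spell out the entry-by-entry bookkeeping a bit more explicitly than the paper does, but the argument is the same.
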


\begin{proof} For any polynomial $f$ in $\field[x_1, \dots,
    x_n]$, applying the operator $\T_\lambda$ on $f$ evaluates $f$ at
  $x_m = z_{j,i}$ for $1 \le i \le k$, $1\le j \le t_i$ and $m$ in
  $I_{j,i}$. By the multivariable chain rule,
\begin{align*}
\frac{\partial \T_\lambda(f)}{\partial z_{j,i}} &= \sum_{m \in
  I_{j,i}} \T_\lambda\left( \frac{\partial f}{\partial
  x_m}\right ). 
\end{align*}
If $f$ is symmetric, 
for $m,m'$
in $I_{j,i}$, we then  have
\[\T_\lambda\left( \frac{\partial f}{\partial x_m}\right ) =
\T_\lambda\left( \frac{\partial f}{\partial 
  x_{m'}}\right ),\]
so that, for $m$ in $I_{j,i}$,
\[ \T_\lambda\left( \frac{\partial f}{\partial x_m}\right ) = \frac
1{n_i} \frac{\partial \T_\lambda(f)}{\partial z_{j,i}}.\] 
This argument can be extended to a sequence of polynomials to obtain
our claim.
\end{proof}

\begin{example}
We continue  Example \ref{ex:matZ} with a single $S_7$-invariant
polynomial $f= \sum_{1 \le i \le j \le   7}x_ix_j$. Then $$\T_\lambda(f) = 3z_{1,1}^2+ 3z_{2,1}^2 +
6z_{1,2}^2+6z_{1,1}z_{1,2}+4z_{1,1}z_{2,1}+6z_{1,2}z_{2,1},$$ and so
\[\jac(\T_\lambda(f)) = (6 z_{1,1} + 6 z_{1,2} + 4 z_{2,1}, 4 z_{1,1} +
6 z_{1,2} + 6 z_{2,1}, 6 z_{1,1} + 12 z_{1,2} + 6 z_{2,1}).\] This
implies that $\jac(\T_\lambda(f)) \cdot \bm{Z}$ is equal to $(u, u, v, v, w, w, w)$,
with
\[u=3 z_{1,1} + 3 z_{1,2} + 2 z_{2,1}, v=2 z_{1,1} + 3 z_{1,2} + 3 z_{2,1}, w=2 z_{1,1} + 4 z_{1,2} + 2 z_{2,1}.\]
This is precisely $\T_\lambda(\jac(f))$. 
 \end{example} 

\begin{corollary}\label{lemma:extend_slice}
  Under the assumptions of the previous lemma, if $\bm{f}$ satisfies
  condition $({\sf A})$, then $\T_\lambda(\bm{f}) \subset \field[\bm{z}_1, \dots,
    \bm{z}_k]$ does as well.
\end{corollary}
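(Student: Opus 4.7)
The plan is to reduce the rank condition on $\T_\lambda(\bm{f})$ to the rank condition on $\bm{f}$ by means of the "inflation" map that sends a point of $\fieldC{}^\ell$, with $\ell=t_1+\cdots+t_k$, to the point of $\fieldC{}^n$ obtained by repeating its coordinates according to the partition $\lambda$. This is precisely the recipe used to define $U_\lambda$ in Section~\ref{sec:prelim}.

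The first step is to pick an arbitrary $\bm{v} \in V(\T_\lambda(\bm{f})) \subset \fieldC{}^\ell$ and let $\bm{u} \in \fieldC{}^n$ be its inflation. Since $\T_\lambda$ is exactly the substitution that repeats each $z_{j,i}$ according to the block structure of $\lambda$, evaluating a polynomial $\T_\lambda(g)$ at $\bm{v}$ coincides with evaluating $g$ itself at $\bm{u}$. Applying this componentwise to $\bm{f}$ shows $\bm{f}(\bm{u})=0$, hence $\bm{u}\in V(\bm{f})$, and assumption $({\sf A})$ gives that the $s \times n$ matrix $\jac_{x_1,\dots,x_n}(\bm{f})(\bm{u})$ has rank $s$. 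Applying the same observation entry-wise to the Jacobian, one gets $\T_\lambda(\jac_{x_1,\dots,x_n}(\bm{f}))(\bm{v}) = \jac_{x_1,\dots,x_n}(\bm{f})(\bm{u})$, a matrix of rank $s$.

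Next, I would evaluate the matrix identity of Lemma~\ref{lemma:decom} at the point $\bm{v}$:
\[
\jac_{\bm{z}_1,\dots,\bm{z}_k}(\T_\lambda(\bm{f}))(\bm{v}) \cdot \bm{Z} \;=\; \T_\lambda(\jac_{x_1,\dots,x_n}(\bm{f}))(\bm{v}),
\]
noting that $\bm{Z}$ is a constant matrix so evaluation commutes with the product. Since the right-hand side has rank $s$ and the rank of a product is bounded by the rank of each factor, the $s \times \ell$ matrix $\jac_{\bm{z}_1,\dots,\bm{z}_k}(\T_\lambda(\bm{f}))(\bm{v})$ has rank at least $s$; having only $s$ rows, it has rank exactly $s$. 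Since $\bm{v}$ was arbitrary in $V(\T_\lambda(\bm{f}))$, condition $({\sf A})$ holds for $\T_\lambda(\bm{f})$.

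There is no real obstacle here: the identification $\T_\lambda(g)(\bm{v})=g(\bm{u})$ is an immediate consequence of the definition of $\T_\lambda$, and the rank transfer from the product to its left factor is elementary linear algebra. The only point deserving care is to keep track of the dimensions, namely that the Jacobian of $\T_\lambda(\bm{f})$ has $s$ rows and $\ell$ columns, so the lower bound $s$ on its rank is tight and matches condition $({\sf A})$.
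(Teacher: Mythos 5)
Your proof is correct and follows essentially the same path as the paper: inflate the zero $\bm{v}\in V(\T_\lambda(\bm{f}))$ to a point of $V(\bm{f})$, observe that $\T_\lambda(g)(\bm{v})=g(\bm{u})$, and apply Lemma~\ref{lemma:decom} to transfer the rank condition. The only cosmetic difference is that you phrase the final step as a rank inequality for a matrix product, whereas the paper speaks of trivial left kernels; these are the same observation.
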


\begin{proof}
  Let $\bm{\alpha} = (\alpha_{1,1}, \dots, \alpha_{t_1, 1}, \dots,
  \alpha_{1, k}, \dots, \alpha_{t_k k})$ be a zero of $\T_\lambda(\bm{f})$
  in $\fieldC^\ell$. We have to prove that $\jac_{\bm{z}_1, \dots,
    \bm{z}_k}(\T_\lambda(\bm{f}))(\bm{\alpha})$ has a trivial left kernel.
  
  Consider the point
  \begin{multline}\bm{\varepsilon} = \big(\underbrace{\alpha_{1, 1},
      \dots, \alpha_{1, 1}}_{n_1}, \dots,  \underbrace{\alpha_{t_1, 1},
      \dots, \alpha_{t_1,1}}_{n_1}, \dots, \\  \underbrace{\alpha_{1,k},
      \dots, \alpha_{1,k}}_{n_k}, \dots, \underbrace{\alpha_{t_k, k}, \dots,
      \alpha_{t_k, k}}_{n_k} \big) \in \fieldC^n,\label{eq:back}
  \end{multline} which lies in $V(\bm{f})$.
  In particular, for any  $g$ in $\Q[x_1,\dots,x_n]$, we
  have $\T_\lambda(g)(\bm\alpha) = g(\bm
  \varepsilon)$. 
  Applying this to the Jacobian matrix of $\bm{f}$, we obtain
  $\T_\lambda(\jac(\bm{f}))(\bm\alpha) = \jac(\bm{f})(\bm\varepsilon).$
  Since by assumption $\bm{f}$ is symmetric, 
  the previous lemma implies that
  \[\jac(\bm{f})(\bm\varepsilon) = \jac_{\bm{z}_1, \dots, \bm{z}_k}(\T_\lambda(\bm{f}))(\bm{\alpha}) \cdot \bm
    Z.\] Since $ \jac(\bm{f})(\bm\varepsilon)$ has rank $s$ (by condition
    ${\sf A}$), the left kernel of $ \jac(\bm{f})(\bm\varepsilon)$ is
    trivial.

    It follows that the left kernel of $\jac_{\bm{z}_1, \dots,
      \bm{z}_k}(\T_\lambda(\bm{f}))(\bm{\alpha})$ is also trivial.
\end{proof}

When we represent $S_{t_1}\times \cdots \times S_{t_k}$-invariant
functions in terms of Newton sums, we can show that the new
representation also preserves condition ({\sf A}).

\begin{lemma}\label{lemma:transferreg}
  Assume $(g_1, \ldots, g_s)\subset \field[\bm{z}_1, \dots, \bm{z}_k]$ is
  $S_{t_1}\times \cdots \times S_{t_k}$- invariant and satisfies
  condition $({\sf A})$.  If we set $h_i = \gamma_{g_i}$ for all $i$,
  then $(h_1, \ldots, h_s)$ also satisfies condition $({\sf A})$.
\end{lemma}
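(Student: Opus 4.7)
The plan is to pull the Jacobian computation back to the original $\bm{z}$-variables through the Newton-sum parametrization, and then to invoke condition $({\sf A})$ for $\bm{g}$. Recall from Lemma \ref{lemma:newtonmap} that each $g_i$ can be written as $g_i = h_i(\bmN)$, where $\bmN = (\bmN_1, \dots, \bmN_k)$ assembles the Newton sums of each block, so if we also let $\bm{P}$ denote the corresponding polynomial map $\fieldC^\ell \to \fieldC^\ell$, we have $\bm{g} = \bm{h} \circ \bm{P}$.

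Fix $\bm{q} \in V(h_1, \dots, h_s) \subset \fieldC^\ell$. The first step is to exhibit a preimage $\bm{z} \in \fieldC^\ell$ with $\bm{P}(\bm{z}) = \bm{q}$. This is done block by block, exactly as in Section \ref{sec:prelim}: Newton's identities express the elementary symmetric polynomials of $z_{1,i}, \dots, z_{t_i,i}$ as polynomials in the Newton sums $P_{1,i}, \dots, P_{t_i,i}$, so any prescribed vector of Newton sums is realized as that of the $t_i$ roots (in $\fieldC$) of a corresponding monic polynomial $\rho_i(T)$. With such a $\bm{z}$ in hand, the identity $g_i(\bm{z}) = h_i(\bm{P}(\bm{z})) = h_i(\bm{q}) = 0$ places $\bm{z}$ in $V(\bm{g})$.

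Next, applying the chain rule to $\bm{g} = \bm{h} \circ \bm{P}$ gives the matrix factorization
\[ \jac_{\bm{z}}(\bm{g})(\bm{z}) \;=\; \jac_{\bm{n}}(\bm{h})(\bm{q}) \cdot \jac_{\bm{z}}(\bm{P})(\bm{z}). \]
By condition $({\sf A})$ for $\bm{g}$, the left-hand side has rank $s$ and therefore trivial left kernel. Any row vector $\bm{v}$ with $\bm{v} \cdot \jac_{\bm{n}}(\bm{h})(\bm{q}) = 0$ satisfies, after right-multiplication by $\jac_{\bm{z}}(\bm{P})(\bm{z})$, the equation $\bm{v} \cdot \jac_{\bm{z}}(\bm{g})(\bm{z}) = 0$, forcing $\bm{v} = 0$. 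Thus $\jac_{\bm{n}}(\bm{h})(\bm{q})$ has rank $s$, which is exactly condition $({\sf A})$ for $\bm{h}$.

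The main subtlety to be aware of is that $\jac_{\bm{z}}(\bm{P})(\bm{z})$ is not invertible in general: on each block it is, up to diagonal factors, a Vandermonde matrix in $z_{1,i}, \dots, z_{t_i,i}$, which degenerates whenever two of these coordinates coincide. A direct tangent-space or inverse-function argument is therefore unavailable, but the one-sided left-kernel computation above uses only the matrix factorization and so sidesteps this issue entirely, mirroring the argument used in the proof of Corollary \ref{lemma:extend_slice}.
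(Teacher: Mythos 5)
Your proof is correct and follows essentially the same route as the paper: both factor $\jac(\bm{g}) = \jac(\h)(\Nmap)\cdot \jac(\Nmap)$ via the chain rule, choose a preimage under the Newton-sum map $\Nmap$ of a zero of $\h$, and invoke condition $({\sf A})$ for $\bm{g}$ at that preimage. The only cosmetic difference is that the paper concludes via the rank bound $\rank(AB)\le\rank(A)$ while you conclude via a left-kernel inclusion (as in Corollary~\ref{lemma:extend_slice}); these are equivalent, and your extra remarks on the surjectivity of $\Nmap$ and the harmlessness of a degenerate Vandermonde factor are accurate.
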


\begin{proof}
  The Jacobian matrix $\jac(\bm{g})$ of $(g_1, \ldots, g_s)$ factors as
  \[
  \jac(\bm{g}) = \jac(\h)(\Nmap) \cdot \bmMV, \ {\rm where}
  \ \bmMV = {\rm diag}(V_1, \dots, V_k) 
  \]
  with each $V_i$  a row-scaled Vandermonde matrix given by
  \begin{equation} \label{eq:vander}
  V_i = {\small
  \left( \begin{matrix} 1& & &  \\
     & 2 & & \\
      & & \ddots &  \\
     &  &  & {t_i}
  \end{matrix}\right)
  \left( \begin{matrix} 1& 1& \cdots & 1 \\ z_{1, i} & z_{2, i}
    &\cdots & z_{t_i, i} \\ \vdots & & & \vdots \\ z_{1, i}^{t_i-1} &
    z_{2, i}^{t_i-1} &\cdots & z_{t_i, i}^{t_i-1}
  \end{matrix}\right) }.
  \end{equation}
  Let $\bm{\eta} $ be a point in the vanishing set of $(h_1, \ldots,
  h_s)$ and let $\bm{\varepsilon}$ be in $\Nmap^{-1}(\bm{\eta})$. If
  $\jac(\h)$ is rank deficient at $\bm{\eta}$ then $\jac(\h)(\Nmap)(\bm
  \varepsilon)$ is also rank deficient. This implies that the rank of
  $\jac(\bm{g})(\bm{\varepsilon})$, which is bounded above by those of
  $\jac(\h)(\Nmap)(\bm{\varepsilon})$ and $\bmMV(\bm{\varepsilon})$, is
  deficient.
\end{proof}

Similarly, instead of using a row-scaled Vandermonde matrix $V_i$ as
in \eqref{eq:vander}, we can use $V_i$ as the Jacobian matrix of
elementary symmetric functions in $\bm{z}_i$. This gives a similar result
but for the polynomials $\zeta_{g_1}, \dots, \zeta_{g_s}$. 

\begin{lemma}\label{lemma:transferreg_ele}
  Assume $(g_1, \ldots, g_s)\subset \field[\bm{z}_1, \dots, \bm{z}_k]$ is
  $S_{t_1}\times \cdots \times S_{t_k}$- invariant and satisfies
  condition $({\sf A})$. Then the sequence of polynomials
  $(\zeta_{g_1}, \ldots, \zeta_{g_s})$ also satisfies condition $({\sf
    A})$.
\end{lemma}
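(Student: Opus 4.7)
The plan is to mirror the proof of Lemma \ref{lemma:transferreg}, with Newton sums replaced by elementary symmetric functions. Writing $\bm{E} = (\bm{E}_1, \ldots, \bm{E}_k)$ with $\bm{E}_i = (E_{1,i}, \ldots, E_{t_i,i})$, Lemma \ref{lemma:elemmap} lets me express each $g_j$ as $g_j = \zeta_{g_j}(\bm{E})$. Setting $\bm{\zeta} = (\zeta_{g_1}, \ldots, \zeta_{g_s})$, the multivariate chain rule yields the factorization
\[
\jac_{\bm{z}}(\bm{g}) \;=\; \jac(\bm{\zeta})(\bm{E}) \cdot \bm{W}, \qquad \bm{W} \;=\; {\rm diag}(W_1, \ldots, W_k),
\]
in which each block $W_i$ is the Jacobian of $(E_{1,i}, \ldots, E_{t_i,i})$ with respect to $\bm{z}_i$. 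This is exactly the analogue, anticipated in the paragraph immediately preceding the statement, of the row-scaled Vandermonde factorization used in Lemma \ref{lemma:transferreg}.

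For the rank transfer, let $\bm{\eta} \in \fieldC^\ell$ be an arbitrary zero of $\bm{\zeta}$. I construct a preimage $\bm{\varepsilon} \in \fieldC^\ell$ with $\bm{E}(\bm{\varepsilon}) = \bm{\eta}$ blockwise, exactly as in Section \ref{sec:prelim:notation}: for each $i$, I take the coordinates of $\bm{\varepsilon}$ in the $i$-th block to be the roots in $\fieldC$ of
\[
\rho_i(T) \;=\; T^{t_i} - \eta_{1,i}\, T^{t_i-1} + \cdots + (-1)^{t_i} \eta_{t_i,i},
\]
which exist by the fundamental theorem of algebra. By construction $\bm{g}(\bm{\varepsilon}) = \bm{\zeta}(\bm{E}(\bm{\varepsilon})) = \bm{\zeta}(\bm{\eta}) = 0$, so $\bm{\varepsilon} \in V(\bm{g})$, and hypothesis $({\sf A})$ on $\bm{g}$ gives $\rank\, \jac_{\bm{z}}(\bm{g})(\bm{\varepsilon}) = s$. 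Evaluating the chain rule identity at $\bm{\varepsilon}$ and using that the rank of a product is bounded above by the rank of each factor then forces $\rank\, \jac(\bm{\zeta})(\bm{\eta}) \ge s$, hence exactly $s$ since $\bm{\zeta}$ has only $s$ components.

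The argument is essentially routine once the factorization is written down, and there is no serious obstacle. The one point worth flagging, however, is that $\bm{W}$ vanishes on the discriminant locus (its determinant is, up to sign, $\prod_i \prod_{a<b}(z_{a,i}-z_{b,i})$), so one cannot argue by inverting $\bm{W}$ and propagating the rank in the opposite direction. The surjectivity of the map $\bm{E}\colon \fieldC^\ell \to \fieldC^\ell$ via the root-finding step above sidesteps this: it lets me start from an arbitrary zero $\bm{\eta}$ of $\bm{\zeta}$ and pull it back to a point of $V(\bm{g})$ where $({\sf A})$ applies directly, so that only the trivial one-sided rank-of-product inequality is needed to conclude.
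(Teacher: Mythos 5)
Your proof is correct and takes essentially the same route the paper intends: the paper states this lemma follows ``similarly'' to Lemma~\ref{lemma:transferreg}, replacing the row-scaled Vandermonde blocks with the Jacobian of the elementary symmetric polynomials, which is exactly the factorization $\jac(\bm{g}) = \jac(\bm{\zeta})(\bm{E})\cdot\bm{W}$ you write down. The only cosmetic difference is that the paper phrases the rank transfer contrapositively (rank deficiency of $\jac(\bm{\zeta})$ at $\bm{\eta}$ would force rank deficiency of $\jac(\bm{g})$ at a pullback $\bm{\varepsilon}$), whereas you argue directly via the one-sided rank-of-product inequality; both hinge on the surjectivity of $\bm{E}$ and are logically equivalent.
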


\section{Critical loci}\label{sec:locus}

If $W \subset \C^\ell$ is an equidimensional algebraic set, and $\phi$
a polynomial function defined on $W$, a non-singular point $\bm{w} \in W$
is called a {\em critical point} of $\phi$ on $W$ if the gradient of
$\phi$ at $\bm{w}$ is normal to the tangent space $T_{\bm{w}} W$ of $W$ at
$\bm{w}$.

If $\bm{g}=(g_1,\dots,g_s)$ are generators of the ideal associated to $W$, then
$T_{\bm{w}}W$ is the right kernel of the Jacobian matrix $\jac(\bm{g})$ of $\bm{g}$
evaluated at $\bm{w}$. In the cases we will consider, this matrix will have rank
$s$ at all points of $W$ (that is, $\bm{g}$ satisfies condition ${\sf A}$). The set
of critical points of the restriction of $\phi$ to $W$ is then defined by the
vanishing of $\bm{g}$, and of the $(s+1)$-minors of the Jacobian matrix $\jac({\bm{g},
  \phi})$ of $\bm{g}$ and $\phi$.

\subsection{Finiteness through genericity}\label{ssec:finitestatement}

Let $\bm{g} = (g_1, \ldots, g_s)$ in $\K[\bm{z}_1, \dots. \bm{z}_k]$ with each
$g_i$ invariant under the action of $S_{t_1}\times\cdots \times
S_{t_k}$; we write $\ell = t_1+\cdots +t_k$.  We introduce some useful
$S_{t_1}\times\cdots \times S_{t_k}$-invariant mappings and discuss
the properties of their critical points on $V(\bm{g}) \subset
\fieldC{}^\ell$.

For $1\leq i \leq k$, let $\fraka_i = (\fraka_{1, i}, \ldots,
\fraka_{\macrodelta_i, i})$ be new indeterminates, and recall that $P_{j,
  i}$ is the $j$-th Newton sum for the variables $\bm{z}_i$. Set
\begin{equation}
\label{eq:calNN}
\calN_{\fraka} =
\sum_{i=1}^k c_i \N_{t_i+1, i} +
\sum_{i=1}^k\sum_{j=1}^{\macrodelta_i}\fraka_{j,i} \N_{j,i}
\end{equation} where $c_i=1$ if
$\macrodelta_i$ is odd and $c_i=0$ if $\macrodelta_i$ is even.  So
$\calN_{\fraka}$ has even degree and is invariant under the action
of $S_{t_1}\times \cdots \times S_{t_k}$. 
For $\a=(\a_1,\dots,\a_k)$ in $\C^{\macrodelta_1} \times \cdots \times
\C^{\macrodelta_k}$, with each $\a_i$ in $\C^{\macrodelta_i}$, we
denote by $\calN_{\a}$ the polynomials in $\C[\bm{z}_1, \dots, \bm{z}_k]$
obtained by evaluating the indeterminates $\fraka_i$ at $\a_i$ in $\calN_{\fraka}$, for all
$i$.

Further, we denote by $\calU \subset \Kbar{}^\ell$ the open set consisting of
points $\bm{w} = (\bm{w}_1, \dots, \bm{w}_k)$ such that the coordinates of $\bm
w_i$ are pairwise distinct for $i=1, \dots, k$. Note that $\calU$ depends on the
partition $\lambda = (n_1^{t_1} \dots n_k^{t_k})$; when needed because of the
use of different partitions, we will denote it by $\calU_{\lambda}$.

\begin{proposition}\label{prop:feasible}
  Let $\bm{g} = (g_1, \dots, g_s)$ be $S_{t_1}\times\cdots \times S_{t_k}$-invariant
  polynomials in $\K[\bm{z}_1, \dots, \bm{z}_k]$. Suppose further that $\bm{g}$ satisfies
  condition {\sf (A)}. Then there exists a non-empty Zariski open set
  $\calA\subset \Kbar{}^{\macrodelta_1}\times \cdots \times
  \Kbar{}^{\macrodelta_k}$ such that for $\a\in \calA$, the restriction of
  $\calN_{\a}$ to $V(\bm{g})$ has finitely many critical points in $\calU$.
\end{proposition}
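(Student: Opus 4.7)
The plan is to reduce to a situation where the family of functions becomes affine-linear in the coordinates, then run a standard incidence variety / fiber dimension argument. First, I would change coordinates using Newton sums: set $\tilde g_i := \gamma_{g_i} \in \K[\bm{p}_1,\dots,\bm{p}_k]$, so that $g_i = \tilde g_i(\Nmap)$. By Lemma~\ref{lemma:transferreg}, $\tilde{\bm g}=(\tilde g_1,\ldots,\tilde g_s)$ still satisfies condition $({\sf A})$, hence $V(\tilde{\bm g}) \subset \Kbar^{\ell}$ is smooth and $(\ell-s)$-equidimensional. Next, since $P_{t_i+1,i}$ is symmetric in $\bm{z}_i$ and the algebra of $S_{t_i}$-invariants in $\bm{z}_i$ is generated by $P_{1,i},\ldots,P_{t_i,i}$, there exist polynomials $Q_i\in\K[\bm p_i]$ with $P_{t_i+1,i}=Q_i(\Nmap_i)$. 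In the $\bm p$ coordinates, $\calN_{\fraka}$ becomes
\[
\calL_{\fraka}(\bm p) \;=\; \sum_{i=1}^{k} c_i\,Q_i(\bm p_i) \;+\; \sum_{i=1}^{k}\sum_{j=1}^{t_i} \fraka_{j,i}\, p_{j,i},
\]
which is affine-linear in $\fraka$ with the linear part being a generic linear form in $\bm p$. Moreover $\Nmap\colon \calU \to \Nmap(\calU)$ is \'etale because its Jacobian is block-diagonal with blocks the row-scaled Vandermondes $V_i$ of \eqref{eq:vander}, which are invertible precisely on $\calU$. Hence critical points of $\calN_{\a}$ on $V(\bm g)$ inside $\calU$ correspond, finite-to-finite, to critical points of $\calL_{\a}$ on $V(\tilde{\bm g})$ inside $\Nmap(\calU)$.

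The heart of the argument is then to bound the dimension of the incidence variety
\[
\scrI := \bigl\{(\bm p,\fraka) \in \Kbar^{\ell}\times\Kbar^{\ell} : \tilde{\bm g}(\bm p)=0,\ \operatorname{rank}\jac_{\bm p}(\tilde{\bm g},\calL_{\fraka})(\bm p) \le s \bigr\}.
\]
Consider the projection $\pi_1:\scrI \to V(\tilde{\bm g})$. For $\bm p \in V(\tilde{\bm g})$, condition $({\sf A})$ (applied to $\tilde{\bm g}$) says that $\jac(\tilde{\bm g})(\bm p)$ has rank $s$, so the rank condition is equivalent to $\nabla\calL_{\fraka}(\bm p)$ lying in the $s$-dimensional row space $R(\bm p)$ of $\jac(\tilde{\bm g})(\bm p)$. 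Writing $\nabla\calL_{\fraka}(\bm p) = \bm b(\bm p) + \fraka$, where $\bm b(\bm p)$ collects the partials of $\sum_i c_i Q_i$ and $\fraka$ is regarded as a vector in $\Kbar^{\ell}$, the fiber $\pi_1^{-1}(\bm p)$ equals the affine subspace $R(\bm p)-\bm b(\bm p)$, which is $s$-dimensional. By the fiber dimension inequality applied to each irreducible component, $\dim \scrI \le (\ell-s)+s = \ell$.

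Finally, consider the projection $\pi_2:\scrI \to \Kbar^{\ell}_{\fraka}$. Either $\pi_2$ is non-dominant, in which case the closure of its image is a proper closed subvariety and we take $\calA$ to be its complement (the fibers of $\pi_2$ over $\calA$ are empty, so the corresponding critical loci are empty in particular finite); or $\pi_2$ is dominant, in which case the generic fiber has dimension $\dim \scrI - \ell \le 0$, and by upper semicontinuity of fiber dimension there is a non-empty Zariski open $\calA \subset \Kbar^{\ell}_{\fraka}$ over which the fibers are finite. For $\a \in \calA$, the set of critical points of $\calL_{\a}$ on $V(\tilde{\bm g})$ is thus finite; pulling back through the \'etale map $\Nmap|_{\calU}$ (which has fibers of cardinality $t_1!\cdots t_k!$), the set of critical points of $\calN_{\a}$ on $V(\bm g)$ lying in $\calU$ is finite, as required. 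The main technical obstacle is the dimension count in the middle paragraph, and in particular making sure that the ``Newton-sum'' family $\calL_{\fraka}$ is rich enough: this is guaranteed because its linear part, $\sum_{i,j}\fraka_{j,i}p_{j,i}$, realises every linear form on $\Kbar^{\ell}$, which is precisely what makes the fiber of $\pi_1$ over each $\bm p$ have the expected dimension $s$.
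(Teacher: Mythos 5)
Your proof is correct, and it takes a genuinely different route from the paper's. Both proofs begin the same way: they pass from $\bm g$ to $\tilde{\bm g}=\gamma_{\bm g}$ via Lemma~\ref{lemma:transferreg}, express $\calN_\fraka$ as $\gamma_{\calN_\fraka}$ which is affine-linear in $\fraka$, and use the fact that $\Nmap$ is finite (in your version, \'etale) on $\calU$ to reduce to the compressed coordinates. The divergence is in how generic finiteness is obtained. The paper introduces Lagrange multipliers $L_1,\dots,L_s$ to encode the rank-drop locus as the zero set of the Lagrange system $\bmH_\a$, and then regards $\a$ as the value of the map $(\bm\eta,\bm w)\mapsto -\big(\sum_i \eta_i \partial h_i/\partial p_{j,i} + c_i\,\partial Q_i/\partial p_{j,i}\big)$ on $\Kbar^s\times V(\h)$; Sard's theorem then gives a Zariski-dense set of regular values $\a$, at which the Jacobian of $\bmH_\a$ has full rank — hence, by the Jacobian criterion, $\langle\bmH_\a\rangle$ is \emph{radical and zero-dimensional}. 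You instead avoid Lagrange multipliers, encode the rank condition directly as ``$\nabla\calL_\fraka(\bm p)\in R(\bm p)$'', and set up the incidence variety $\scrI$ with the projection to the $\fraka$-space; the key input — that the linear part $\sum_{i,j}\fraka_{j,i}p_{j,i}$ realises every linear form on $\Kbar^\ell$, forcing each $\pi_1$-fiber to be an $s$-dimensional affine subspace — is exactly what makes the dimension count $\dim\scrI\le\ell$ go through. What your route buys is conceptual economy (no auxiliary variables, a self-contained fiber-dimension argument). What the paper's route buys is the stronger conclusion of radicality of the Lagrange ideal, which Proposition~\ref{prop:newpols} explicitly asserts and which matters later because the subroutines from \cite{faugere2020computing,labahn2021homotopy} consume that Lagrange system and rely on it defining a reduced, zero-dimensional scheme. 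So your argument proves exactly what Proposition~\ref{prop:feasible} states, but not the intermediate Proposition~\ref{prop:newpols} on which the algorithmic part leans.
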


\subsection{Proof of Proposition \ref{prop:feasible}}
For new variables $\vl_1, \ldots, \vl_s$, we denote by
$\calS_{\fraka}$ the polynomials
\[
\calS_{\fraka}  = \big(g_1, \ldots, g_s, \quad  [\vl_1 \cdots \vl_s~1] \cdot \jac(\bm{g}, \calN_{\fraka}) \big).
\] For $\a=(\a_1,\dots,\a_k)$ in $\C^{\macrodelta_1} \times \cdots \times \C^{\macrodelta_k}$, with each $\a_i$
in $\C^{\macrodelta_i}$, we denote by $\calS_{\a}$ the polynomials in
$\C[\vl_1,\dots,\vl_s,\bm{z}_1, \dots. \bm{z}_k]$ obtained by evaluating
$\fraka_i$ at $\a_i$ in $\calS_{\fraka}$, for all $i$. Finally,
denote by $\pi$ the projection from the $(L,\bm{z})$-space $\C^{s + \ell}$ to the $\bm{z}$-space $\C^{\ell}$. 
\begin{lemma}\label{lemma:proj}
  Suppose that $\bm{g}$ satisfies condition {\sf (A)}.  Then for $\a\in
  \Kbar{}^{\macrodelta_1}\times \cdots \times \Kbar{}^{\macrodelta_k}$,
  $\pi(V(\calS_{\a}))$ is the critical locus of the restriction of the
  map $\calN_{\a}$ to $V(\bm{g})$.
\end{lemma}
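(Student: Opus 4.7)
The plan is to establish the set equality by double inclusion, with both directions reducing to a Lagrange-multiplier rewriting of the critical-point condition. Abbreviate $\phi:=\calN_{\a}$ and write $C\subset V(\bm{g})$ for the critical locus of the restriction of $\phi$ to $V(\bm{g})$. The key preparatory observation is that condition $({\sf A})$ forces $\jac(\bm{g})(\bm{z})$ to have rank $s$ at every $\bm{z}\in V(\bm{g})$; hence $V(\bm{g})$ is smooth of codimension $s$ and its tangent space at $\bm{z}$ coincides with $\ker\jac(\bm{g})(\bm{z})$. Equivalently, the $(s+1)$-minors of $\jac(\bm{g},\phi)(\bm{z})$ vanish precisely when the last row $\nabla\phi(\bm{z})$ lies in the row span of the first $s$ rows of $\jac(\bm{g})(\bm{z})$. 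Thus a point $\bm{z}\in V(\bm{g})$ belongs to $C$ if and only if there exist scalars $\mu_1,\dots,\mu_s$ with $\nabla\phi(\bm{z})=\sum_{i=1}^s\mu_i\,\nabla g_i(\bm{z})$.

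For the inclusion $\pi(V(\calS_{\a}))\subseteq C$, I would pick any $(\bm{\vl},\bm{z})\in V(\calS_{\a})$. The first $s$ entries of $\calS_{\a}$ give $\bm{g}(\bm{z})=0$, and the equation $[\vl_1\ \cdots\ \vl_s\ 1]\cdot\jac(\bm{g},\phi)(\bm{z})=0$ rewrites, reading the rows of $\jac(\bm{g},\phi)$, as $\nabla\phi(\bm{z})+\sum_{i=1}^s \vl_i\,\nabla g_i(\bm{z})=0$, exhibiting $\nabla\phi(\bm{z})$ in the row span of $\jac(\bm{g})(\bm{z})$; so $\bm{z}\in C$ by the preceding characterization.

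For the reverse inclusion $C\subseteq\pi(V(\calS_{\a}))$, I would take $\bm{z}\in C$ and invoke the same characterization to obtain $\mu_1,\dots,\mu_s$ with $\nabla\phi(\bm{z})=\sum_{i=1}^s\mu_i\,\nabla g_i(\bm{z})$. Setting $\vl_i:=-\mu_i$ for each $i$, the point $(\bm{\vl},\bm{z})\in\Kbar{}^{s+\ell}$ annihilates every component of $\calS_{\a}$, hence $\bm{z}\in\pi(V(\calS_{\a}))$.

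The argument is essentially linear algebra once smoothness is in hand, so there is no real obstacle; the only point worth watching is that $({\sf A})$ is genuinely needed to identify the algebraic definition of $C$ via the vanishing of $(s+1)$-minors with the geometric definition via gradients orthogonal to the tangent space, which is exactly what guarantees that the Lagrange multipliers $\mu_i$ (equivalently the $\vl_i$) exist.
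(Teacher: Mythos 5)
Your proposal is correct and follows essentially the same approach as the paper: both reduce to a Lagrange-multiplier argument, using condition $({\sf A})$ to guarantee that in any linear dependence among the rows of $\jac(\bm{g},\calN_{\a})$ the coefficient on the last row is nonzero, so the multipliers can be normalized to yield a point of $V(\calS_{\a})$. The only cosmetic difference is that you phrase the rank drop via the row span of $\jac(\bm{g})$, whereas the paper works directly with a left-kernel vector and divides through by its last coordinate.
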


\begin{proof}
  For any $\a\in \Kbar{}^{\macrodelta_1}\times \cdots \times
  \Kbar{}^{\macrodelta_k}$, we denote by $W(\calN_{\a}, \bm{g})$ the set of critical
  points of the restriction of $\calN_{\a}$ to $V(\bm{g})$. Since $\bm{g}$ satisfies
  condition {\sf (A)}, the set $W(\calN_{\a}, \bm{g})$ is given by
  \[
 \{ \bm{w}  ~ |  g_1(\bm{w}) = \cdots = g_s(\bm{w}) = 0, \quad \rank(\jac(\bm{g},\calN_{\a})(\bm{w})) \le s \}.
  \]
  Consider $\bm{w}$ in $W(\calN_{\a}, \bm{g})$ and a nonzero vector $\bm{c}$ 
  in the left kernel of $\jac(\bm{g},\calN_{\a})(\bm{w})$, of the form $\bm
  \c=(c_1,\dots,c_s,c_{s+1})$. The last coordinate $c_{s+1}$ cannot
  vanish, as otherwise $(c_1,\dots,c_s)$ would be a nonzero vector in
  the left kernel of $\jac(\bm{g})(\bm{w})$ (which is ruled out by
  condition {\sf (A)}).  Dividing through by $c_{s+1}$, the point
  $(\bm{c}',\bm{w})$, with $c'_i=c_i/c_{s+1}$ for $i=1,\dots,s$, is a
  solution of $\calS_{\a}$.

  Conversely, take $( \bm{\ell},\bm{w})$ in $V(\calS_{\a})$. Thus, $\bm{w}$
  cancels $\bm{g}$, and $\jac(\bm{g}, \calN_{\a})$ has rank less than $s+1$ at
  $\bm{w}$, so that $\pi(V(\calS_a))$ is in $W(\calN_{\a}, \bm{g})$.
\end{proof}
Let $\calN_{\fraka}$ and $\gamma_{\calN_{\fraka}}$ be defined as in
\eqref{eq:calNN} and Lemma~\ref{lemma:newtonmap}, respectively.  For
$i=1,\dots,k$, set $Q_{i} = \gamma_{\N_{t_i+1,i}}$, and
let $h_1,\dots,h_s =\gamma_{g_1},\dots,\gamma_{g_s}$.
In particular, Lemma
\ref{lemma:newtonmap} implies that $\gamma_{\calN_{\fraka}}$ is given
by
\[
\sum_{i=1}^k c_i Q_{i} +
\sum_{i=1}^k\sum_{j=1}^{\macrodelta_i}\fraka_{j,i}\n_{j,i}.
\] 
The sequence $\calS_{\fraka}$ can be rewritten as
\begin{multline*}\hspace{-0.3cm} h_1\circ \bmN, \ldots, h_s\circ \bmN, \\ 
[\vl_1~\ldots~\vl_s~1]
  \left(\begin{matrix}
    \frac{\partial h_1}{\partial \n_{1,1}} & \cdots & \frac{\partial
                                                     h_1}{\partial
                                                     \n_{t_k, k}}
                                                   \\ 
    \vdots &  & \vdots \\
    \frac{\partial h_s}{\partial \n_{1, k}} & \cdots & \frac{\partial
                                                      h_s}{\partial \n_{t_k,
                                                      k}} 
                                                  \\ 
{c_1 \frac{\partial
        Q_{1}}{\partial \n_{1,1}}+}    \fraka_{1,1} & \cdots & {c_k\frac{\partial
        Q_{k}}{\partial \n_{t_k, k}} + }\fraka_{t_k, k} 
  \end{matrix}\right)_{\bmN(\bm{z})} \cdot \bmMV,
\end{multline*}
where $\bmMV$ is a multi-row-scaled Vandermonde matrix which is the Jacobian
matrix of $\bmN$ with respect to $\bm{z}$. This matrix has full rank at any point
$\bm{w}$ in the open set $\calU$ defined in
Subsection~\ref{ssec:finitestatement}.

In particular, for any $\a\in
\Kbar{}^{\macrodelta_1}\times\cdots\times \Kbar{}^{\macrodelta_k}$,
the intersection of $V(\calS_{\a})$ with $\C^s\times \calU$ is contained
in the preimage by the map ${\rm Id}\times \bmN$ of the vanishing set
of the sequence
\begin{multline*}\hspace{-0.3cm}\bmH_{\a} : \quad  h_1, \ldots, h_s, \\
[\vl_1~\cdots~\vl_s~1]
  \left(\begin{matrix}
    \frac{\partial h_1}{\partial \n_{1,1}} & \cdots & \frac{\partial
                                                     h_1}{\partial \n_{t_k, k}} \\
    \vdots &  & \vdots \\
    \frac{\partial h_s}{\partial \n_{1, 1}} & \cdots & \frac{\partial
                                                      h_s}{\partial \n_{t_k, k}} \\
    {c_1\frac{\partial
        Q_{1}}{\partial \n_{1,1}}+} a_{1,1}  & \cdots & {c_k\frac{\partial
        Q_{k}}{\partial \n_{t_k, k}}+ } a_{t_k, k}\\
  \end{matrix}\right).
\end{multline*}
Since for all $1\leq i \leq k$, $\bmN_i$ defines a map with finite fibers (by
Newton identities and Vieta's formula, the preimage by $\bmN$ of some point is
the set of roots of some polynomial of degree $t_i$), we deduce that $\bmN$ and
consequently $ {\rm Id}\times \bmN$ define maps with finite fibers. Thus 
\begin{lemma}\label{lemma:HtoS}
  If $V(\bmH_{\a})$ is finite, then $V(\calS_{\a})\cap (\C^{s}\times \calU)$ is
  finite.
\end{lemma}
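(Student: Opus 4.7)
The plan is to exploit the factorization of $\calS_{\fraka}$ through the Newton map $\bmN$ that is set up in the paragraphs immediately preceding the lemma. Recall that on the open set $\C^s \times \calU$, the multi-row-scaled Vandermonde matrix $\bmMV(\bm{z})$ has full rank $\ell$, so multiplying by it on the right does not change the vanishing locus of the bracket expression in $\calS_{\fraka}$. Consequently, a point $(\bm{\ell},\bm{w}) \in \C^s \times \calU$ lies in $V(\calS_{\bm{a}})$ if and only if $(\bm{\ell}, \bmN(\bm{w}))$ lies in $V(\bmH_{\bm{a}})$. In other words,
\[
V(\calS_{\bm{a}}) \cap (\C^s \times \calU) \subset (\mathrm{Id} \times \bmN)^{-1}\bigl(V(\bmH_{\bm{a}})\bigr).
\]

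Next I would argue that the map $\mathrm{Id} \times \bmN$ has finite fibers. For each $i$, the Newton identities (equivalently Vieta's formulas) express the elementary symmetric functions of $\bm{z}_i$ polynomially in $\bmN_i(\bm{z}_i) = (\N_{1,i}, \ldots, \N_{t_i,i})$; hence knowing $\bmN_i(\bm{z}_i)$ determines a univariate polynomial of degree $t_i$ whose roots (counted with multiplicity) are exactly the coordinates of $\bm{z}_i$. Each fiber of $\bmN_i$ therefore has at most $t_i!$ elements, so each fiber of $\bmN = (\bmN_1, \ldots, \bmN_k)$ has at most $t_1! \cdots t_k!$ elements, and the same bound holds for the fibers of $\mathrm{Id} \times \bmN$.

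Putting the two observations together, if $V(\bmH_{\bm{a}})$ is finite, its preimage under the finite-to-one map $\mathrm{Id} \times \bmN$ is a finite union of finite fibers and hence finite; the intersection $V(\calS_{\bm{a}}) \cap (\C^s \times \calU)$, being contained in this preimage, is therefore finite as well. There is no substantial obstacle in this argument: all the algebraic content is already packaged in the full-rank statement for $\bmMV$ on $\calU$ and in the classical bijection between power sums and elementary symmetric functions, so the proof is essentially a short inclusion followed by the standard finite-fiber principle.
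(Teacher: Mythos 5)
Your proposal is correct and follows essentially the same route the paper itself uses: it applies the preceding factorization of $\calS_{\fraka}$ through $\bmN$, uses the full-rank property of $\bmMV$ on $\calU$ to obtain $V(\calS_{\a})\cap(\C^s\times\calU)\subset(\mathrm{Id}\times\bmN)^{-1}(V(\bmH_{\a}))$, and then invokes the finite-fiber property of $\bmN$ via Newton's identities/Vieta's formulas. The only (harmless) addition is that you note the stronger ``if and only if'' characterization and the explicit fiber bound $t_1!\cdots t_k!$, neither of which is needed for the conclusion.
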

It remains to investigate finiteness properties of $V(\bmH_{\a})$. 
\begin{proposition} \label{prop:newpols}
  Suppose that $\h$ satisfies condition {\sf (A)}. Then, there exists
  a non-empty Zariski open set $\calA\subset \Kbar{}^{\macrodelta_1}\times
  \cdots \times \Kbar{}^{\macrodelta_k}$ such that for any $\a\in \calA$,
  $\langle \bmH_{\a} \rangle \subset \Kbar[\vl_1,\dots,\vl_s,\bm{z}_1,
    \dots, \bm{z}_k]$ is a radical ideal whose zero-set is finite.
\end{proposition}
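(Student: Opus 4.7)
The plan is to recognize $\bmH_{\a}$ as the Lagrangian system for the critical points of the polynomial $\gamma_{\calN_{\a}} = \sum_i c_i Q_i + \sum_{i,j} \fraka_{j,i} \n_{j,i}$ restricted to $V(\h)$, with the $\vl_r$ playing the role of Lagrange multipliers. The key structural observation is that each parameter $\fraka_{j,i}$ appears with coefficient $1$ in exactly one of the last $\ell$ equations of $\bmH_{\a}$ and in no other. Form the incidence variety
\[
\mathcal{I} := \{(\a, \bm{\ell}, \bm{w}) \in \Kbar^\ell \times \Kbar^s \times \Kbar^\ell : \bmH_{\a}(\bm{\ell}, \bm{w}) = 0\}.
\]
The projection $\sigma : \mathcal{I} \to \Kbar^s \times V(\h)$, $(\a, \bm{\ell}, \bm{w}) \mapsto (\bm{\ell}, \bm{w})$, is then a scheme-theoretic isomorphism: its inverse expresses each $\fraka_{j,i}$ as an explicit polynomial in $(\bm{\ell}, \bm{w})$, obtained by solving the corresponding linear equation for $\fraka_{j,i}$. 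Since $\h$ satisfies condition $({\sf A})$, $V(\h)$ is smooth and $(\ell-s)$-equidimensional, so $\mathcal{I}$ is smooth and equidimensional of dimension exactly $\ell$.

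Consider next the projection $\pi : \mathcal{I} \to \Kbar^\ell$, $(\a, \bm{\ell}, \bm{w}) \mapsto \a$, whose source and target have the same dimension~$\ell$. If $\pi$ is not dominant, the Zariski closure of its image is a proper subvariety $\mathcal{B}$ of $\Kbar^\ell$, and for every $\a$ in the non-empty Zariski open set $\calA := \Kbar^\ell \setminus \mathcal{B}$ the fiber $V(\bmH_{\a})$ is empty, so $\langle \bmH_{\a} \rangle$ is the unit ideal and the claim is trivially satisfied. Otherwise $\pi$ is a dominant morphism of smooth varieties over a field of characteristic zero, and generic smoothness (the algebraic Sard theorem) furnishes a non-empty Zariski open $\calA \subseteq \Kbar^\ell$ over which $\pi$ is smooth. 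For any $\a \in \calA$, the scheme-theoretic fiber $\pi^{-1}(\a) = V(\bmH_{\a})$ is then smooth of relative dimension $\dim \mathcal{I} - \dim \Kbar^\ell = 0$, i.e.\ a disjoint union of finitely many reduced points; this is precisely the assertion that $\langle \bmH_{\a} \rangle$ is a radical ideal whose zero-set is finite.

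The main subtlety is verifying that the identification $\mathcal{I} \cong \Kbar^s \times V(\h)$ holds as schemes rather than merely as sets, since this is what legitimately transfers smoothness and the dimension count from $V(\h)$ to $\mathcal{I}$. This reduces to the observation that the defining equations of $\mathcal{I}$ solve uniquely and polynomially for $\a$ in terms of $(\bm{\ell}, \bm{w})$, thanks to the diagonal structure introduced by the perturbation term $\sum_{i,j} \fraka_{j,i} \n_{j,i}$ and the hypothesis $({\sf A})$ on $\h$. Once this structural reduction is in place, the rest is a standard application of generic smoothness, and no explicit differential calculation is required; in particular, one does not need to argue separately that $\pi$ is dominant, since the non-dominant case is itself favourable (generic fibers being empty).
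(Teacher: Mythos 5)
Your proof is correct and follows essentially the same route as the paper: both recognize $V(\bmH_{\a})$ as the fiber over $\a$ of a map from the smooth $\ell$-dimensional source $\Kbar^s\times V(\h)$ (smoothness coming from condition $({\sf A})$) to the $\a$-space $\Kbar^{\ell}$, and both conclude by applying the algebraic Sard/generic smoothness theorem. Your incidence-variety packaging and explicit treatment of the non-dominant case are minor stylistic refinements of the paper's argument, which simply writes down the map $(\bm{\eta},\bm{w})\mapsto -\bigl(\sum_r \bm{\eta}_r\,\partial h_r/\partial \n_{j,i}+c_i\,\partial Q_i/\partial \n_{j,i}\bigr)_{j,i}$ directly and invokes Sard to bound the critical values inside a proper closed set.
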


\begin{proof}
  Let $W\subset \Kbar{}^{t_1}\times \cdots\times \Kbar{}^{t_k}$ be the
  vanishing set of $(h_1, \ldots, h_s)$. Consider now the map
\begin{multline*}
\hspace{-.3cm}(\bm{\eta}, \bm{w})\in \Kbar{}^{s}\times W
 \to \\- \left( \sum_{i=1}^s\bm\eta_i\frac{\partial
  h_i}{\partial \n_{1,1}}{+c_1\frac{\partial Q_{1}}{\partial
    \n_{1,1}}} \right)_{(\bm{w})}, \ldots, -\left(
\sum_{i=1}^s\bm\eta_i\frac{\partial h_i}{\partial \n_{t_k, k}}
    {+c_k\frac{\partial Q_{k}}{\partial \n_{t_k,k}}}\right)_{(\bm{w})}.
\end{multline*}
  By Sard's theorem \cite[Chap. 2, Sec. 6.2, Thm
    2]{shafarevich1994basic}, the set of critical values of this map
  is contained in a proper Zariski closed set $\mathcal{B}$ of
  $\Kbar{}^{t_1}\times \cdots \times \Kbar{}^{t_k}$. Since $\h$
  satisfies condition ${\sf (A)}$, for $\a$ outside $\mathcal{B}$, the
  Jacobian matrix of $\bmH_{\a}$ has full rank at any $(\bm{\eta}, \bm
  w)$ with $\bm{w}$ in $W$.  Hence, by the Jacobian criterion \cite[Thm
    16.19]{eisenbud2013commutative}, the ideal generated by $\bmH_{\a}$
  in $\Kbar[\vl_1,\dots,\vl_s,\bm{z}_1, \dots, \bm{z}_k]$ is radical and is of
  dimension at most zero.
\end{proof}

\begin{proof}[Proof of Prop \ref{prop:feasible}]
  Let $\calA$ be the non-empty Zariski open set defined in Prop
  \ref{prop:newpols}.  Since $\bm{g}$ satisfies condition $({\sf A})$, Lemma
  \ref{lemma:proj} implies that, for any $\a \in \calA$, the critical
  locus of the map $\calN_{\a}$ restricted to $V(\bm{g})$ is equal to
  $\pi(V(\calS_{\a}))$. In addition, the sequence $(\h)$ also satisfies
  condition $({\sf A})$ by Lemma \ref{lemma:transferreg}. Then, by
  Prop.~\ref{prop:newpols}, for any $\a \in \calA$, the algebraic set
  defined by $\bmH_{\a}$ is finite. 

  By Lemma~\ref{lemma:HtoS}, this implies that $V(\calS_{\a})$ contains
  finitely many points in $\C^s\times\mathcal{U}$. This finishes
  our proof of Prop.~\ref{prop:feasible}.
\end{proof}

Using techniques from~\cite{ElGiSc20}, one could give a simple
exponential upper bound the degree of a hypersurface containing the
complement of $\calA$.

\subsection{Finding extrema using  proper maps}
\label{sec:proper}

A real valued function $\psi : \R^n \rightarrow \R$ is {\em proper} at
$x \in \R$ if there exists an $\varepsilon>0$ such that 
$\psi^{-1}([x-\varepsilon,x+\varepsilon])$ is compact. Such functions
are of interest because a proper polynomial restricted to a real
algebraic set $W$ reaches extrema on each connected component of $W$. Using
\cite[Thm~2.1 and Cor 2.2]{sakkalis2005note} one can construct proper
polynomials in the following way.

Let $F = F_k(x_1, \dots, x_n) + F_{k-1}(x_1, \dots, x_n) + \cdots +
F_0(x_1, \dots, x_n): \R^n \rightarrow \R$ be a real polynomial, where
$F_i$ is the homogeneous component of degree $i$ of $F$. Assume
further that the leading form $F_k$ of $F$ is positive definite; then,
$F$ is proper. In particular, the map $P_{2m} + \sum_{i=0}^{2m-1}
\lambda_i P_{i}$, with $P_i$ the Newton sums in $x_1,\dots,x_n$ and
all $\lambda_i$ in $\K$, is proper. We can extend this to blocks of
variables.

\begin{lemma} Let $\bm{z}_1, \dots, \bm{z}_k$ be blocks of $t_1, \dots,
  t_k$ variables, respectively. If $P_{j,i} := z_{1,i}^j + \cdots +
  z_{t_{i},i}^j$, then for any $m_1, \dots, m_k \ge 1$ and coefficients
  $\lambda_{i,j}$ in $\K$, the
  map
  \[\sum_{i=1}^k P_{2m_i,i} + \sum_{i=1}^k \sum_{j=0}^{2m_i-1} \lambda_{j,i} P_{j,i}\]
  is
  proper.\label{lemma:proper_power}
\end{lemma}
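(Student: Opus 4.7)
The plan is to reduce properness of the full map $F := \sum_i P_{2m_i,i} + \sum_{i,j} \lambda_{j,i} P_{j,i}$ to a block-wise version of the one-block statement already recalled from \cite{sakkalis2005note}. The main obstacle is that a direct application of that leading-form criterion fails when the $m_i$'s differ: the top-degree homogeneous component of $F$ involves only variables from those blocks with maximal $2m_i$, so it is not positive definite as a form on $\R^{t_1+\cdots+t_k}$. Consequently I would not try to invoke Sakkalis globally on $F$.

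Instead I would first note that $F$ splits cleanly as $F = \sum_{i=1}^k F_i(\bm{z}_i)$, where
\[
F_i(\bm{z}_i) \;:=\; P_{2m_i,i}(\bm{z}_i) + \sum_{j=0}^{2m_i-1} \lambda_{j,i}\, P_{j,i}(\bm{z}_i)
\]
depends only on the block $\bm{z}_i$. Each $F_i$ is a real polynomial in $t_i$ variables whose leading form $P_{2m_i,i}(\bm{z}_i) = \sum_{r=1}^{t_i} z_{r,i}^{2m_i}$ is positive definite on $\R^{t_i}$. The single-block case of \cite[Thm~2.1 and Cor 2.2]{sakkalis2005note} quoted above therefore applies to each $F_i$ and shows it is proper; moreover, since the leading form is nonnegative and vanishes only at the origin, $F_i(\bm{z}_i) \to +\infty$ as $\|\bm{z}_i\| \to \infty$. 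In particular $F_i$ attains a global minimum on $\R^{t_i}$, so there exists $M_i \in \R$ with $F_i(\bm{z}_i) \ge M_i$ for all $\bm{z}_i \in \R^{t_i}$.

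Finally I would conclude using the standard characterization that a polynomial $G\colon \R^N \to \R$ is proper if and only if $|G(\bm{z})| \to \infty$ as $\|\bm{z}\|\to\infty$. Given a sequence with $\|(\bm{z}_1^{(\nu)},\dots,\bm{z}_k^{(\nu)})\| \to \infty$, at least one block $\bm{z}_{i_0}^{(\nu)}$ has $\|\bm{z}_{i_0}^{(\nu)}\| \to \infty$ along a subsequence. Along that subsequence $F_{i_0}(\bm{z}_{i_0}^{(\nu)}) \to +\infty$ by the previous step, while for $i \ne i_0$ one has $F_i(\bm{z}_i^{(\nu)}) \ge M_i$. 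Summing gives $F(\bm{z}^{(\nu)}) \to +\infty$, hence $|F| \to \infty$ as $\|\bm{z}\|\to\infty$, which establishes properness. The whole argument is essentially the book-keeping step that converts the single-block Sakkalis statement into the multi-block one; no further input is needed.
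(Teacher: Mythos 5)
Your argument is correct and fills in precisely the ``extend this to blocks of variables'' step that the paper leaves implicit: the paper states Lemma~\ref{lemma:proper_power} immediately after the one-block Sakkalis fact with no further proof, and the decomposition $F=\sum_i F_i(\bm{z}_i)$, each $F_i$ coercive by Sakkalis and bounded below, is the natural and intended argument. One small polish: your final step literally produces $F\to+\infty$ only along a subsequence of a given escaping sequence; this does force the full limit, but it is cleaner to argue directly that for any $M$ one can choose $R_i$ with $\|\bm{z}_i\|>R_i\Rightarrow F_i(\bm{z}_i)>M-\sum_{j\ne i}M_j$, so that whenever $\|\bm{z}\|$ is large some block exceeds its $R_i$ and the uniform lower bounds $M_j$ on the remaining blocks yield $F(\bm{z})>M$.
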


\section{Main result}\label{sec:main}

Let $\bm{f} = (f_1, \ldots, f_s)$ be a sequence of symmetric polynomials
in $\K[x_1, \dots, x_n]$ that satisfies condition {\sf (A)}. In this section we present an algorithm and its complexity to decide
whether the real locus of $V(\bm{f})$ is empty or not.

To exploit the symmetry of $\bm{f}$ and to decide whether the set
$V_{\R}(\bm{f})$ is empty or not, our main idea is slicing the variety
$V(\bm{f})$ with hyperplanes which are encoded by a partition $\lambda$ of
$n$. This way, we obtain a new polynomial system which is invariant
under the action $S_{\lambda} := S_{t_1} \times \cdots \times S_{t_k}$
of symmetric groups. We proved in Lemma \ref{lemma:extend_slice} that
this new system also satisfies condition {\sf (A)}. We then use the
critical point method to decide whether the real locus of the
algebraic variety defined by this new system is empty or not by taking
a $S_{\lambda}$-invariant map as defined in the previous section.

\subsection{Critical points along $S_n$-orbits}

Let $\bm{g} = (g_1, \dots, g_s)$ be a sequence of $S_{\lambda}$-invariant
polynomials and $\phi$ be a $S_\lambda$-invariant map in $\K[\bm{z}_1,
  \dots, \bm{z}_k]$, with $\bm{z}_i=(z_{1, i}, \dots, z_{t_i, i})$ for all
$i$. As before, we set $\ell = t_1 + \cdots + t_k$, and we assume that
$s \le \ell$. Assume further that the sequence $\bm{g}$ satisfies
condition $({\sf A})$. Let $\phi$ be a $S_\lambda$-invariant map in $\K[\bm{z}_1,
  \dots, \bm{z}_k]$. 

Let $\zeta_{\phi}$ and $\zeta_{\bm{g}}$ in $\K[\bm{e}_1, \dots, \bm{e}_k]$, where
$\bm{e}_i = (e_{1,i}, \dots, e_{t_i, i})$ is a set of $t_i$ new
variables, be such that  
\[
 \phi = \zeta_{\phi}(\bm{E}_1, \dots, \bm{E}_k) \quad {\rm and} \quad \bm{g} =
 \zeta_{\bm{g}}(\bm{E}_1, \dots, \bm{E}_k). 
\]  
Here $\bm{E}_i = (E_{1, i}, \dots, E_{t_i, i})$ denotes the
vector of elementary symmetric polynomials in variables $\bm{z}_i$, with
each $E_{j, i}$ having degree~$j$ for all $j, i$.

\begin{lemma} \label{lemma:subroutine}
  Let $\bm{g}, \phi$, and $\lambda$ as above. Assume further that
  $\zeta_\phi$ has finitely many critical points on
  $V(\zeta_{\bm{g}})$. Then there exists a randomized algorithm ${\sf
    Critical\_points}$ $(\bm{g}, \phi, \lambda)$ which returns a
  zero-dimensional parametrization of   the critical points of
  $\zeta_{\phi}$ restricted to $V(\zeta_{\bm{g}})$. 
  The algorithm uses
  $$\softO\left(\delta^2c_\lambda(e_\lambda + c_\lambda^5)n^4\Gamma\right)$$
  operations in $\K$, where
  \begin{align*}&c_\lambda  = \frac{\deg(g_1)\cdots \deg(g_s)
      \cdot E_{\ell-s}(\delta-1, \dots, \delta-\ell)}{t_1!\cdots t_k!}, \\  &\Gamma = 
    n^2{{n+\delta}\choose \delta} + n^4 {{n}\choose {s+1}}, \ {\rm and}\\
    &e_\lambda = \frac{n(\deg(g_1)+1)\cdots (\deg(g_s)+1)\cdot
      E_{\ell-s}(\delta, \dots,  \delta-\ell+1)}{t_1!\cdots t_k!},
  \end{align*} with  $\delta = \max(\deg(\bm{g}), \deg(\phi))$.
  The number of solutions is at most $c_\lambda$.
\end{lemma}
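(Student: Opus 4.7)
The plan is to reduce the problem to one that can be dispatched by the weighted symbolic homotopy machinery of \cite{labahn2021homotopy} (as used also in \cite{faugere2020computing}). Since both $\bm{g}$ and $\phi$ are $S_\lambda$-invariant, Lemma~\ref{lemma:elemmap} gives unique representatives $\zeta_{g_1},\dots,\zeta_{g_s}$ and $\zeta_\phi$ in $\K[\bm{e}_1,\dots,\bm{e}_k]$, where I view each $e_{j,i}$ as a variable of weight $j$. By Lemma~\ref{lemma:transferreg_ele}, $\zeta_{\bm{g}}$ inherits condition $({\sf A})$, so $V(\zeta_{\bm{g}})$ is smooth of pure codimension $s$ and, by the Jacobian criterion, the critical locus of $\zeta_\phi$ restricted to $V(\zeta_{\bm{g}})$ is cut out by $\zeta_{\bm{g}}$ together with the vanishing of the $(s+1)$-minors of $\jac_{\bm{e}}(\zeta_{\bm{g}},\zeta_\phi)$.

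First I would mimic the construction of $\calS_{\fraka}$ from Section~\ref{sec:locus} and replace the minor equations by a Lagrange-multiplier system
\[
\calS \;=\; \bigl(\zeta_{g_1},\dots,\zeta_{g_s},\ [\vl_1\ \cdots\ \vl_s\ 1]\cdot \jac_{\bm{e}}(\zeta_{\bm{g}},\zeta_\phi)\bigr)
\]
in the variables $(\vl_1,\dots,\vl_s,\bm{e})$. The same argument as in Lemma~\ref{lemma:proj}, using condition $({\sf A})$ for $\zeta_{\bm{g}}$, shows that the projection of $V(\calS)$ onto the $\bm{e}$-coordinates is exactly the desired critical locus; since that locus is finite by hypothesis, $V(\calS)$ is finite as well.

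Second, I would feed $\calS$ to the weighted symbolic homotopy solver of \cite{labahn2021homotopy}: pick a random start system matching the weighted degree pattern of $\calS$, track paths as the coefficients are deformed, and obtain a zero-dimensional parametrization of $V(\calS)$. A standard elimination of $\vl_1,\dots,\vl_s$ at the level of parametrizations (a linear change of variables in the associated univariate representation) then yields the zero-dimensional parametrization of the critical locus that the lemma requires.

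The main obstacle is the complexity bookkeeping, which must match the stated bound. The quantity $c_\lambda$ is the weighted B\'ezout number of $\calS$: each $\zeta_{g_i}$ contributes the weighted degree $\deg(g_i)$, while the entries of row $j$ of $\jac_{\bm{e}}$ have weighted degree at most $\delta-j$, so the last row of $\calS$ contributes a factor equal to the elementary symmetric expression $E_{\ell-s}(\delta-1,\dots,\delta-\ell)$ in those shifted degrees; dividing by $t_1!\cdots t_k!$ accounts for the fact that generic $S_\lambda$-orbits of size $t_1!\cdots t_k!$ in the $\bm{z}$-space collapse to single points in the $\bm{e}$-space. The quantity $e_\lambda$ arises analogously as the degree of a generic one-parameter curve traced by the homotopy, and $\Gamma$ bounds the cost of a single straight-line-program evaluation of $\calS$ together with its first derivatives (whence the two summands in $\Gamma$, one for evaluating the $\zeta_{g_i}$ and one for the $(s+1)\times(s+1)$ cofactors used in the last row). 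Plugging these bounds into the generic complexity statement of the homotopy continuation algorithm gives the announced $\softO(\delta^2 c_\lambda(e_\lambda+c_\lambda^5)n^4\Gamma)$ estimate, and the cardinality claim is just the B\'ezout bound $c_\lambda$ on $|V(\calS)|$. The delicate part is verifying that the weighted-degree bookkeeping and the evaluation-complexity estimates of \cite{labahn2021homotopy} apply to $\calS$ as stated, without picking up hidden polynomial factors.
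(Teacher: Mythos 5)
Your overall skeleton (express everything in the $\bm{e}$-coordinates via Lemma~\ref{lemma:elemmap}, use Lemma~\ref{lemma:transferreg_ele} to transport condition~$({\sf A})$, then call the weighted symbolic homotopy of \cite{labahn2021homotopy}) matches the paper's strategy, but you have substituted a different polynomial system for the one the paper actually solves, and the complexity bookkeeping does not carry over to your choice.

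The paper does \emph{not} pass a Lagrange-multiplier system to the homotopy solver. It feeds the determinantal system consisting of $\zeta_{\bm{g}}$ together with all $(s+1)$-minors of $\jac_{\bm e}(\zeta_{\bm{g}},\zeta_\phi)$ directly to \cite[Thm 5.3]{labahn2021homotopy} (with the detailed analysis of \cite[Sec.\ 5.2]{faugere2020computing}). This is what the stated bounds encode: the $n^4\binom{n}{s+1}$ summand in $\Gamma$ is precisely the cost of an evaluation circuit for the $\binom{\ell}{s+1}\le\binom{n}{s+1}$ maximal minors, and the factor $E_{\ell-s}(\delta-1,\dots,\delta-\ell)$ in $c_\lambda$ is a determinantal weighted-B\'ezout bound coming from a rank-defect locus, not a product of equation degrees. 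Your Lagrange system $\calS$ in $(\vl_1,\dots,\vl_s,\bm{e})$ lives in a space of dimension $s+\ell$, and you never say what weight the auxiliary variables $\vl_i$ receive; the weighted-B\'ezout count of $\calS$ is \emph{not} $c_\lambda$ in any evident way, and evaluating $\calS$ involves a single vector-matrix product, not ``$(s+1)\times(s+1)$ cofactors'', so the $\binom{n}{s+1}$ summand in $\Gamma$ has no natural interpretation in your setting. You flag the ``delicate bookkeeping'' yourself, but that delicate part is exactly where the argument would need to change, and as written it does not go through. (The Lagrange system $\calS_\fraka$ of Section~\ref{sec:locus} is used in the paper only for the finiteness argument of Proposition~\ref{prop:feasible}, not for the computation.)

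A second, smaller omission: the paper explicitly accounts for computing $\zeta_{\bm{g}}$ and $\zeta_\phi$ by calling {\sf Symmetric\_Coordinates} of \cite[Lemma 9]{faugere2020computing} at cost $\softO\bigl(\binom{\ell+\delta}{\delta}^2\bigr)$; you invoke Lemma~\ref{lemma:elemmap} for existence only, without an algorithm or a cost for this step. Finally, the elimination of $\vl_1,\dots,\vl_s$ from a zero-dimensional parametrization at the end of your version is not free and is not ``a linear change of variables''; this is an extra step the paper simply does not need.
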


\begin{proof}
  The ${\sf Critical\_points}$ procedure contains two steps: first
  finding $\zeta_{\bm{g}}$ and $\zeta_{\phi}$ from $\bm{g}$ and $\phi$ and
  then computing a representation for the set $W(\zeta_{\phi},
  \zeta_{\bm{g}})$ of critical points of $\zeta_{\phi}$ on
  $V(\zeta_{\bm{g}})$. The first step can be done using the algorithm
  ${\sf Symmetric\_Coordinates}$ from \cite[Lemma
    9]{faugere2020computing}, which uses $\softO\left({{\ell+\delta}\choose
    \delta}^2\right)$ operations in $\K$.

  Since the sequence $\bm{g}$ satisfies condition $({\sf A})$,
  Lemma~\ref{lemma:transferreg_ele} implies that $\zeta_{\bm{g}}$ also
  satisfies condition $({\sf A})$. Then, the set $W(\zeta_{\phi},
  \zeta_{\bm{g}})$ is the zero set of $\zeta_{\bm{g}}$ and all the
  $(s+1)$-minors of $\jac(\zeta_{\bm{g}}, \zeta_{\phi})$. In particular,
  when $\ell = s$, $W(\zeta_{\phi}, \zeta_{\bm{g}})=V(\zeta_{\bm{g}})$.

  Since each $E_{j, i}$ has degree $j$, it is natural to assign a
  weight $j$ to the variable $e_{j, i}$, so that the polynomial ring
  $\K[\bm{e}_1, \dots, \bm{e}_k]$ is weighted of weights $(1, \dots, t_1,
  \dots, 1, \dots, t_k)$. The weighted degrees of $\zeta_{\bm{g}}$ and
  $\zeta_{\phi}$ are then equal to those of $\bm{g}$ and $\phi$,
  respectively.  To compute a zero-dimensional parametrization for
  $W(\zeta_{\phi}, \zeta_{\bm{g}})$ we use the symbolic homotopy method for
  weighted domain given in \cite[Thm 5.3]{labahn2021homotopy} (see
  also \cite[Sec 5.2]{faugere2020computing} for a detailed complexity
  analysis). This procedure is randomized and requires
\[
\softO\left(\delta^2c_\lambda(e_\lambda + c_\lambda^5)n^4\Gamma\right) \
{\rm operations \ in \   \K}.
\] Furthermore, results from \cite[Thm 5.3]{labahn2021homotopy} also
imply that the number of points in the output is at most $c_\lambda$. 

Thus, the total complexity of the ${\sf Critical\_points}$ algorithm 
is then  $\softO\left( \delta^2c_\lambda(e_\lambda +
  c_\lambda^5)n^4\Gamma\right)$ operations in $\K$.  
\end{proof}

\subsection{The {\sf Decide} procedure}

Consider a partition $\lambda = (n_1^{t_1}\, \dots \, n_k^{t_k})$ of
$n$, and let \[\scrR_\lambda = (v, v_{1,1}, \dots, v_{t_1, 1}, \dots,
v_{1,k},\dots, v_{t_k, k},\mu)\] be a parametrization which encodes a
finite set $W_\lambda\subset \C^\ell $. This set lies in the target space
of the algebraic map $F_\lambda : U_\lambda \rightarrow \Kbar{}^\ell$
defined in Subsection~\ref{sec:prelim:notation} as
\begin{multline}
   \bm{u} = ( \underbrace{u_{1, 1}, \dots, u_{1, 1}}_{n_1}, ~ 
  \dots,~ \underbrace{u_{t_k, k}, \dots, u_{t_k, k}}_{n_k}) \\
  \mapsto (E_{1, i}(u_{1, i}, \dots,
  u_{t_i, i}), \dots,E_{t_i, i}(u_{1, i}, \dots,u_{t_i, i}))_{1\le i \le k},
\end{multline} where $E_{j, i}(u_{1, i}, \dots,u_{t_i, i})$  is the
$j$-th elementary 
symmetric function in $u_{1, i}, \dots,u_{t_i, i}$ for $i=1, \dots, k$
and $j=1, \dots, t_i$.
 
Let $V_\lambda$ be the preimage of $W_\lambda$  by $F_\lambda$. In
this subsection we present a procedure called ${\sf
  Decide}(\scrR_\lambda)$ which takes as input $\scrR_\lambda$, and
decides whether the set $V_\lambda$ contains real points. 

In order to do this, a straightforward strategy consists in solving
the polynomial system to invert the map $F_\lambda$. Because of the
group action of $S_{t_1}\times \cdots \times S_{t_k}$, we would then
obtain $t_1 !\cdots t_k ! $ points in the preimage of a single point
in $W_\lambda$: we would lose the benefit of all that had been done
before.

This difficulty can be bypassed by encoding one single point per orbit
in the preimage of the points in $W_\lambda$. This can be done via
the following steps.
\begin{itemize}
\item[(i)] Group together the variables $\bm{e}_i = (e_{1,i}, \ldots, e_{t_i,i})$ which 
  encode the values taken by the elementary symmetric functions $E_{i,1},
  \ldots, E_{i, t_i}$ (see Sec. \ref{sec:prelim:notation}) and denote by
  $v_{i,1},\ldots, v_{i, t_i}$ the parametrizations corresponding to
  $e_{1, i},  \ldots, e_{t_i, i}$;
\item[(ii)]  Make a reduction to a bivariate polynomial system by
  considering the polynomial with coefficients in $\K[t]$
  \[
    \rho_i = v'u^{t_i} - v_{1, i} u^{t_i - 1} + \cdots + (-1)^{t_i}
    v_{t_i, i} \in \K[t][u]
  \]
  and ``solving'' the system $\rho_i = v = 0$. Here we recall that $v\in \K[t]$ and 
  is square-free,  so that $v$ and $v'$ are coprime.
\item[(iii)]  It remains to decide whether, for all $1\leq i \leq k$, there is a real
  root $\vartheta$ of $v$ such that when replacing $t$ by $\vartheta$ in
  $\rho_i$, the resulting polynomial has all its roots real. To do this we
  proceed by performing the following steps for $1\leq i\leq k$:
  \begin{enumerate}
  \item first we compute the Sturm-Habicht sequence associated to $\left(
      \rho_i, \frac{\partial \rho_i}{\partial u} \right)$ in $\K[t]$ (the
    Sturm-Habicht sequence is a signed subresultant sequence, see \cite[Chap.
    9, Algo. 8.21]{BPR06});
  \item next, we compute Thom-encodings of the real roots of $v$, which is a way
    to uniquely determine the roots of a univariate polynomial with real
    coefficients by means of the signs of its derivatives at the considered real
    root (see e.g. \cite[Chap. 10, Algo. 10.14]{BPR06});
  \item finally, for each real root $\vartheta$ of $v$, evaluate the signed
    subresultant sequence at $\vartheta$ \cite[Chap. 10, Algo. 10.15]{BPR06} and
    compute the associated Cauchy index to deduce the number of real roots of
    $\rho_i$ (see \cite[Cor. 9.5]{BPR06}). 
  \end{enumerate}
\item[(iv)]  For a given real root $\vartheta$ of $v$, it holds that, for all
  $1\leq i\leq k$, the number of real roots of $\rho_i$ equals its degree, if
  and  only if $V_\lambda$ is non-empty.
\end{itemize}

The above steps describe our {\sf Decide}, which returns {\sf false}
if $V_\lambda$ contains real points, else {\sf true}.

\subsection{The main algorithm}

Our main algorithm ${\sf Real\_emptiness}$ takes symmetric polynomials
$\bm{f} = (f_1, \dots, f_s)$ in $\K[x_1, \dots, x_n]$, with $s<n$, which
satisfy condition $({\sf A})$, and decides whether $V_\R(f)$ is empty.

For a partition $\lambda$, we first find the polynomials $\bm{f}_\lambda
:= \T_\lambda(\bm{f})$, which are $S_\lambda$-invariant in $\K[\bm{z}_1,
  \dots, \bm{z}_k]$, where $\T_\lambda$ is defined as in
\eqref{eq:operator_T}. By Corollary~\ref{lemma:extend_slice},
$\bm{f}_\lambda$ satisfies condition {\sf (A)}, so we can apply the
results of Section~\ref{sec:locus}.

Let $\calN_\fraka$ be the map defined in \eqref{eq:calNN} and
$\calA_\lambda \subset \Kbar{}^{\macrodelta_1} \times \cdots \times
\Kbar{}^{\macrodelta_1}$ be the non-zero Zariski open set defined in
Proposition \ref{prop:feasible}. Assume $\a$ is chosen in
$\calA_\lambda$ (this is one of the probabilistic aspects of our
algorithm) at step \ref{step:n}. By
Corollary~\ref{lemma:extend_slice}, $\bm{f}_\lambda$ satisfies condition
$({\sf A})$. Then, the critical locus of the restriction of $\calN_{\a}$
to $V(\bm{f}_{\lambda})$ is of dimension at most zero (by Proposition
\ref{prop:feasible}). In addition, the map $\calN_{\a}$ is invariant under
the action of the group $S_{\lambda}$.

Let $ \zeta_{\calN_{\a}} $ 
and $\zeta_{\bm{f}_{\lambda}}$ in $\K[\bm{e}_1,
\dots, \bm{e}_k]$ such that  
\[
 \calN_{\a} = \zeta_{\calN_{\a}}(\bm{E}_1, \dots, \bm{E}_k) \quad
 {\rm and} \quad  \bm{f}_{\lambda}=  \zeta_{\bm{f}_{\lambda}}(\bm{E}_1, \dots,
 \bm{E}_k). \]   
Here $\bm{E}_i = (E_{1, i}, \dots, E_{t_i, i})$ denotes the
vector of elementary symmetric polynomials in variables $\bm{z}_i$. In the
next step, we compute a zero-dimensional parametrization
$\scrR_{\lambda}$  of the critical set $W_{\lambda}:= W(\zeta_{\calN_{\a}},
\zeta_{\bm{f}_{\lambda}})$ of $\zeta_{\calN_{\a}}$ restricted to
$V(\zeta_{\bm{f}_{\lambda}})$ by using the  ${\sf 
  Critical\_points}$ algorithm from Lemma~\ref{lemma:subroutine}.  The
parametrization $\scrR_{\lambda}$ is
given by a sequence of polynomials $(v, v_{1,1}, \dots,
v_{t_1, 1}, \dots, v_{1, k}, \dots, v_{t_k, k})$ in $\K[t]$  and a
linear form $\mu$.   

At the final step, we run the ${\sf Decide}(\scrR_{\lambda})$ in order
to determine whether the preimage of $W_{\lambda}$ by the map
$F_{\lambda}$ contains real points. 

\begin{algorithm}[!h]
\caption{${\sf Real\_emptiness}(\bm{f})$}
\begin{itemize}
\item[] \hspace{-1cm} {\bf Input}: symmetric polynomials $\bm{f} = (f_1,
\dots, f_s)$ in $\K[x_1, \dots, x_n]$ with $s < n$ such that $\bm{f}$ satisfies
$({\sf A})$ 
\item[] \hspace{-1cm}  {\bf Output}: {\sf false} if $V(\bm{f}) \cap \R^n$ is non-empty; {\sf
  true} otherwise 
\end{itemize}
\noindent\rule{8.5cm}{0.5pt}
\begin{enumerate}
\item for all partitions $\lambda = (n_1^{t_1}\, \dots \, n_k^{t_k})$
  of $n$ of length at least $s$, do
\begin{enumerate}
\item compute $\bm{f}_\lambda = \T_\lambda(\bm{f})$, where $\T_\lambda$ is defined in
  \eqref{eq:operator_T}\label{step:1}
\item using a chosen $\a \in \calA$, where $\calA$ is defined as in Prop \ref{prop:feasible} \label{step:n},  we construct $\calN_\fraka$ as in \eqref{eq:calNN} and then compute $\calN_{\a}$ 
\item compute $\scrR_\lambda = {\sf
    Critical\_points}(\calN_{\a},  \bm{f}_\lambda)$  \label{step:real_part} 
\item run ${\sf Decide}(\scrR_\lambda)$ \label{step:decide}
\item if  ${\sf Decide}(\scrR_\lambda)$ is {\sf false} return {\sf
    false} \label{step:retfalse}
\end{enumerate}
\item return {\sf true}.
\end{enumerate}
\label{main_algo}
\end{algorithm}

\begin{proposition}
  Assume that, on input symmetric $\bm{f}$ as above,
  and satisfying condition
  {\sf (A)}, for all partitions $\lambda$ of length at least $s$, $\a$ is chosen
  in $\calA_\lambda$ and that all calls to the randomized algorithm {\sf
    Critical\_points} return the correct result.
  Then Algorithm ${\sf Real\_emptiness}$ returns {\sf true} if $V(\bm{f})\cap\R^n$ is
  empty and otherwise it returns {\sf false}. 
\end{proposition}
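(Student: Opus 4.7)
Suppose the algorithm returns {\sf false}, so at step~\ref{step:retfalse} the call {\sf Decide}$(\scrR_\lambda)$ returned {\sf false} for some partition $\lambda$. By the construction of {\sf Decide}, this means that $V_\lambda = F_\lambda^{-1}(W_\lambda) \subseteq U_\lambda$ contains a real point $\bm{u}$. Writing $\bm{u}_* \in \R^\ell$ for the compressed form of $\bm{u}$, one has $F_\lambda(\bm{u}) = \bm{E}(\bm{u}_*) \in W_\lambda \subseteq V(\zeta_{\bm{f}_\lambda})$, so by Lemma~\ref{lemma:elemmap}, $\bm{f}_\lambda(\bm{u}_*) = \zeta_{\bm{f}_\lambda}(\bm{E}(\bm{u}_*)) = 0$. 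By the definition of $\T_\lambda$, $\bm{f}(\bm{u}) = \T_\lambda(\bm{f})(\bm{u}_*) = \bm{f}_\lambda(\bm{u}_*) = 0$, so $\bm{u} \in V_\R(\bm{f})$.

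\textbf{Completeness.} Suppose $V_\R(\bm{f}) \neq \emptyset$. The plan is to identify a single partition $\lambda^*$, visited by the loop, for which {\sf Decide}$(\scrR_{\lambda^*})$ returns {\sf false}. Let $\Lambda := \{\lambda : V_\R(\bm{f}) \text{ contains a point of type } \lambda\}$, non-empty by assumption, and pick $\lambda^* \in \Lambda$ of minimum length $\ell^*$. By Corollary~\ref{lemma:extend_slice}, $\bm{f}_{\lambda^*}$ satisfies condition~$({\sf A})$; since its vanishing set is non-empty this forces $\ell^* \ge s$, so $\lambda^*$ is considered by the loop. Fix $\bm{x} \in V_\R(\bm{f})$ of type $\lambda^*$, with compressed form $\bm{x}_* \in V_\R(\bm{f}_{\lambda^*}) \cap \calU_{\lambda^*}$. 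For a sufficiently generic $\a$ (restricting further within $\calA_{\lambda^*}$ so that the leading form of $\calN_{\a}$ is positive definite, cf.\ Lemma~\ref{lemma:proper_power}), the map $\calN_{\a}\colon \R^{\ell^*} \to \R$ is proper, and hence attains its minimum on the closed connected component $C$ of $V_\R(\bm{f}_{\lambda^*})$ containing $\bm{x}_*$ at some point $\bm{z}^*$, necessarily a critical point of $\calN_{\a}$ on $V(\bm{f}_{\lambda^*})$.

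\textbf{Key claim.} The heart of the argument is $\bm{z}^* \in \calU_{\lambda^*}$. If two coordinates of $\bm{z}^*$ within some block coincided, spreading $\bm{z}^*$ back out would produce a real point of $V_\R(\bm{f})$ of type of length strictly less than $\ell^*$, contradicting minimality. With $\bm{z}^* \in \calU_{\lambda^*}$, the elementary symmetric map $\bm{E}$ is a local diffeomorphism there, so $\bm{w}^* := \bm{E}(\bm{z}^*)$ is a real critical point of $\zeta_{\calN_{\a}}$ on $V(\zeta_{\bm{f}_{\lambda^*}})$, hence an element of $W_{\lambda^*}$. The spread-out of $\bm{z}^*$ lies in $F_{\lambda^*}^{-1}(\bm{w}^*) \cap \R^n$, so $V_{\lambda^*}$ contains a real point and {\sf Decide}$(\scrR_{\lambda^*})$ returns {\sf false}.

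The main obstacle is precisely this minimality-based localization of the minimizer into the open stratum $\calU_{\lambda^*}$: if one chose an arbitrary $\lambda$ rather than a minimum-length one, the minimizer of $\calN_{\a}$ on $V_\R(\bm{f}_\lambda)$ could escape $\calU_\lambda$, and the $\bm{e}$-coordinate critical locus computed by {\sf Critical\_points} would not see it as a critical point nor as a real-liftable one. A secondary subtlety is the properness of $\calN_{\a}$, which requires a further Zariski open condition on $\a$ compatible with membership in $\calA_{\lambda^*}$; this additional genericity does not affect the probabilistic hypothesis already made on $\a$.
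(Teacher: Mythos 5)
Your proof is correct and follows essentially the same strategy as the paper: select a partition $\lambda^*$ of minimal length among the types of real points of $V(\bm{f})$, use that minimality to force the relevant minimizer of $\calN_{\a}$ to remain in $\calU_{\lambda^*}$ (so its image under $\bm{E}$ is a genuine critical point picked up by ${\sf Critical\_points}$), and rely on properness of $\calN_{\a}$ to guarantee such a minimizer exists. Your remark that $\a\in\calA_{\lambda^*}$ alone does not deliver properness of $\calN_{\a}$ --- one additionally needs the leading form to be positive definite, i.e.\ $a_{t_i,i}>0$ whenever $t_i$ is even, a semi-algebraic (not Zariski-open) condition compatible with but not implied by $\calA_{\lambda^*}$ --- is a legitimate refinement that the paper glosses over when it invokes Lemma~\ref{lemma:proper_power}.
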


\begin{proof}
  Since $\bm{f}$ satisfies condition $({\sf A})$, Lemma \ref{lemma:extend_slice}
  implies that $\bm{f}_\lambda$ also satisfies this condition. Then, by the Jacobian
  criterion \cite[Thm 16.19]{eisenbud2013commutative}, $V(\bm{f}_\lambda)$ is smooth
  and equidimensional of dimension $(\ell-s)$, where $\ell$ is the length of
  $\lambda$. Therefore, if $\ell < s$, then the algebraic set $V(\bm{f}_\lambda)$ is
  empty. Thus, the union of $V(\bm{f}_\lambda)\cap \calU_\lambda$ where
  $\calU_\lambda$ is the open set defined in
  Subsection~\ref{ssec:finitestatement} and $\lambda$ runs over the partitions
  of $n$ of length at least $s$, forms a partition of $V(\bm{f})$. Hence,
  $V(\bm{f})\cap\R^n$ is non-empty if and only if there exists at least one such
  partition for which $V(\bm{f}_\lambda)\cap \calU_\lambda\cap \R^n$ is non-empty.

  We already observed that for all $\lambda$, $\bm{f}_\lambda$ does
  satisfy condition $({\sf A})$. 
  Since we have assumed that each time Step
  \ref{step:n} is performed, $\a$ is chosen in $\calA_\lambda$, we apply
  Proposition~\ref{prop:newpols} to deduce that the conditions of
  Lemma~\ref{lemma:subroutine} are satisfied. Hence, all calls to 
  {\sf Critical\_points} are valid.

  Note that since we assume that all these calls return the correct result, we
  deduce that their output encodes points which all lie in $V(\bm{f})$. Hence, if
  $V(\bm{f})\cap\R^n$ is empty, applying the routine {\sf Decide} on these outputs
  will always return {\sf true} and, all in all, our algorithm returns {\sf
    true} when $V(\bm{f})\cap\R^n$ is empty.

  It remains to prove that it returns {\sf false} when $V(\bm{f})\cap\R^n$ is
  non-empty. Note that there is a partition $\lambda$ such that
  $V(\bm{f}_\lambda)\cap\R^n$ is nonempty and has an empty intersection with the
  complement of $\calU_\lambda$. That is, all connected components of
  $V(\bm{f}_\lambda)\cap\R^n$ are in $\calU_\lambda$. 
  
  Let $C$ be such a connected component.  By
  Lemma~\ref{lemma:proper_power}, the map $\calN_{\a}$ is proper, and
  non-negative. Hence, its restriction to $V(\bm{f}_\lambda)\cap \R^n$
  reaches its extremum at all connected components of
  $V(\bm{f}_\lambda)\cap \R^n$.  This implies that the restriction of
  $\calN_{\a}$ to $V(\bm{f}_\lambda)$ has real critical points which are
  contained in $C$ (and by Proposition~\ref{prop:feasible} there are
  finitely many). Those critical points are then encoded by the output
  of the call to {\sf Critical\_points} (Step~\ref{step:real_part})
  and {\sf false} is returned.
\end{proof}

\subsection{Complexity analysis}

 Let $d = \max(\deg(\bm{f}))$.  First for a partition $\lambda$, applying
 $\T_\lambda$ to $\bm{f}$ takes linear time in $O(n {{n+d}\choose d})$,
 the number of monomials of $\bm{f}$ and the cost of Step \ref{step:n} is
 nothing. At the core of the algorithm, computing $\scrR_\lambda$ at
 Step~\ref{step:real_part} requires $\softO\left(
 \delta^2c_\lambda(e_\lambda + c_\lambda^5)n^4\Gamma\right)$
 operations in $\K$ by Lemma~\ref{lemma:subroutine}, where $\delta =
 \max(d, \deg(\phi_{\a}))$. Also, the degree of $\scrR_\lambda$ is at
 most $c_\lambda$.

 In order to determine the cost of the {\sf Decide} process at Step
 \ref{step:decide}, let $a$ be the degree of $v$ and $b$ be the maximum of the
 partial degrees of $\rho_i$'s w.r.t. $u$. By the complexity analysis of
 \cite[Algo. 8.21 ; Sec. 8.3.6]{BPR06}, Step {(1)} above is performed within
 $O\left( b^{4}a \right)$ arithmetic operations in $\K[t]$ using a classical
 evaluation interpolation scheme (there are $b$ polynomials to interpolate, all
 of them being of degree $\leq 2ab$). Step {(2)} above requires $O\left(
   a^{4}\log(a) \right)$ arithmetic operations in $\K$ (see the complexity
 analysis of \cite[Algo 10.14; Sec. 10.4]{BPR06}). Finally, in Step (3), we
 evaluate the signs of $b$ polynomials of degree $\leq 2ab$ at the real roots of
 $v$ (of degree $a$) whose Thom encodings were just computed. This is performed
 using $O\left( b a^{3}\left( (\log(a) + b) \right) \right)$ arithmetic
 operations in $\K$ following the complexity analysis of \cite[Algo 10.15; Sec.
 10.4]{BPR06}. The sum of these estimates lies in $O\left(b^{4}a + b
   a^{4}\left( (\log(a) + b) \right) \right)$.

 Now, recall that the degree of $v$ is the degree of $\scrR_\lambda$, so $a \le
 c_\lambda$. The degree of $\rho_i $ w.r.t. $u$ equals $t_i$ and $t_i\leq n$.
 This means $b \le n$. All in all, we deduce that the total cost of this final
 step lies in $O\left( n^{4}c_\lambda + n^{2}c_\lambda\right)$, which
 is negligible compared  to the previous costs.

In the worst case, one need to consider all the partitions of $n$ of
length at least $s$. Thus the total complexity of ${\sf
Real\_emptiness}$ is 
\[
\sum_{\lambda, \ell \ge s} \softO\left(
  \delta^2c_\lambda(e_\lambda + c_\lambda^5)n^4\Gamma\right) 
\]operations in $\K$.  In addition, Lemma 34 in
\cite{faugere2020computing} implies that $$\sum_{\lambda, \ell \ge s}
c_\lambda \le c \ {\rm and} \ \sum_{\lambda, \ell   \ge s}  e_\lambda
\le e,$$ where  $c = \deg(\zeta_{\bm{f}_\lambda})^s{{n+\delta-1}\choose n}
\ {\rm and}  \ e = n(\deg(\zeta_{\bm{f}_\lambda})+1)^s{{n+\delta}\choose
  n}$. Notice further that  ${{n+\delta}\choose
  \delta} \le (n+1){{n+\delta-1}\choose 
  d}$ and $e =n(d+1)^s{{n+\delta}\choose n} \le
 n(n+1)c^5$ for $\delta\ge
2$. In addition, since $\deg(\phi_{\a}) \le \max(t_i)+1 \le n$,
the total cost of our
algorithm is 
\[
\softO\left(d^2n^6c^6\Gamma\right)
 = \softO\left(d^{6s+2}n^{11}{{n+d}\choose n}^6\left({{n+d}\choose n}
     +  {{n}\choose {s+1}} \right) \right)  
\]
 operations in $\K$.

\subsection{An  example }
Let $n=4$ and $s=1$ with  $\bm{f} = (f)$ where $$f = x_1^2 + x_2^2 +
x_3^2 + x_4^2 -6x_1x_2x_3x_4  -1.$$  Consider first the
partition $\lambda= (4^1)$. Then $f_\lambda := \T_\lambda(f)=
-6z_{1,1}^4+4z_{1,1}^2-1$ which has no real solution as
$f_\lambda = -2z_{1,1}^4-(2z_{1,1}^2-1)^2 < 0$ for all  $z_{1,1} \in
\R$.

Next we consider $\lambda=(2^2)$. Then $$f_{(2^2)} = 2z_{1,1}^2+2z_{2,1}^2
-6z_{1,1}^2z_{2,1}^2-1$$ and we take $\phi = 
5(z_{1,1}^2+z_{2,1}^2)-9(z_{1,2}+z_{2,1})-3$.  In this case
$\zeta_{f_{(2^2)}}~=~2e_{1,1}^2-6e_{2,1}^2-4e_{2,1}-1$ and
$\zeta_{\phi} =
5e_{1,1}^2-9e_{1,1}-10e_{2,1}-3$.
The critical points of $\zeta_{\phi}$
restricted to $V(\zeta_{f_{(2^2)}})$ are solutions to
\[
\zeta_{f_{(2^2)}} = \det\big(\jac(\zeta_{f_{(2^2)}},
\zeta_{\phi})\big) = 0,\] that is $2e_{1,1}^2-6e_{2,1}^2-4e_{2,1}-1 =
120e_{1,1}e_{2,1}-108e_{2,1}-36 = 0
$. A zero-dimensional parametrization of these critical points is
given by 
$((v, v_{1,1}, v_{2,1}), \mu)$, where  
\begin{align*}
& v = 200t^4-360t^3+62t^2+60t-27, \\
& v_{1,1}  = t, {~\rm and~}\\ &v_{2,1} =
-\frac{1}{6}t^3+\frac{9}{20}t^2-\frac{31}{600}t-1/20.
\end{align*}
At the final step, we check that the system 
\[
\rho_1 = v = 0, \quad {\rm with} \quad \rho_1 =
v'u^2-v_{1,1}u+v_{2,1} \in \Q[t,u], 
\] has real solutions. This implies that $V_\R(f)$ is non-empty.

The output of our algorithm is consistent with the fact that the point 
$(1,1,1/2,1/2)$ is in $V_\R(f)$.

\section{Topics for Future Research}

Determining topological properties of a real variety $V_\R(\bm{f})$ is an
important algorithmic problem. Here we have presented an efficient
algorithm to determine if $V_\R(\bm{f})$ is empty or not.  More generally,
we expect that the ideas presented here may lead to algorithmic
improvements also in more refined questions, like computing one point
per connected component or the Euler characteristic for a real
symmetric variety. Furthermore, while our complexity gains are
significant for symmetric input we conjecture that we can do better in
certain cases. In particular, when the degree of the polynomials is at
most $n$ then we expect we that a combination with the topological
properties of symmetric semi algebraic sets found in \cite[Prop
  9]{basu2022vandermonde} can reduce the number of orbits considered,
for example, instead of $n^d$ we might only need $n^{d/2}$ for fixed
$d$. Finally, a generalization to general symmetric semi algebraic
sets should be possible.
\newpage

\balance

\bibliographystyle{ACM-Reference-Format}
\bibliography{biblio}

\end{document}